\documentclass{llncs}
\usepackage[utf8]{inputenc}
\usepackage{makeidx}
\usepackage{enumerate}
\usepackage{hyperref}
\usepackage{latexsym, amsmath, amssymb}
\usepackage{stmaryrd} 
\usepackage{bussproofs}
\usepackage{cite}
\usepackage{bbding}
\usepackage{enumitem}
\usepackage{listings}
\lstset{language=caml, basicstyle=\ttfamily\footnotesize, mathescape}
\usepackage{tikz}
\usetikzlibrary{matrix}
\usepackage{url}

\newcommand\dbot{\bot\!\!\!\bot}
\newcommand\converge{\Downarrow}
\newcommand\diverge{\Uparrow}
\newcommand\red{\twoheadrightarrow}
\newcommand\reds{\twoheadrightarrow^{*}}
\newcommand\bbN{\mathbb{N}}
\newcommand\vvdash{\vdash_{\!\!\!\text{val}}}
\newcommand\scissors{\texttt{8<}}
\newcommand{\bnfeq}{\ensuremath{\;::=\;}}
\newcommand{\bnfor}{\ensuremath{\;\;\vert\;\;}}
\newcommand{\tcase}[2]{\ensuremath{\text{case}_{#1}\;[#2]}}

\newcommand{\vsem}[1]{\ensuremath{\llbracket #1 \rrbracket}}
\newcommand{\ssem}[1]{\ensuremath{\llbracket #1 \rrbracket^{\bot}}}
\newcommand{\tsem}[1]{\ensuremath{\llbracket #1 \rrbracket^{\bot\bot}}}

\newenvironment{mproof}{\leavevmode}{\DisplayProof}


\begin{document}
\mainmatter

\title{A Classical Realizability Model for a\\ Semantical Value Restriction}
\titlerunning{Semantical Value Restriction}
\author{Rodolphe Lepigre}
\authorrunning{R. Lepigre}

\institute{
  LAMA, UMR 5127 - CNRS\\
  Université Savoie Mont Blanc, France\\
  \email{rodolphe.lepigre@univ-smb.fr}
}

\maketitle

\begin{abstract}
  We present a new type system with support for proofs of programs
in a call-by-value language with control operators. The proof mechanism
relies on observational equivalence of (untyped) programs. It appears in
two type constructors, which are used for specifying program properties
and for encoding dependent products.
The main challenge arises from the lack of expressiveness of dependent
products due to the value restriction. To circumvent this limitation we
relax the syntactic restriction and only require equivalence to a value.
The consistency of the system is obtained semantically by constructing a
classical realizability model in three layers (values, stacks and terms).

\end{abstract}

\section*{Introduction}
In this work we consider a new type system for a call-by-value language,
with control operators, polymorphism and dependent products. It is intended
to serve as a theoretical basis for a proof assistant focusing on program
proving, in a language similar to OCaml or SML. The proof mechanism relies
on dependent products and equality types $t \equiv u$, where $t$ and $u$ are
(possibly untyped) terms of the language. Equality types are interpreted as
$\top$ if the denoted equivalence holds and as $\bot$ otherwise.

In our system, proofs are written using the same language as programs. For
instance, a pattern-matching corresponds to a case analysis in a proof, and
a recursive call to the use of an induction hypothesis. A proof is first
and foremost a program, hence we may say that we follow the ``program as
proof'' principle, rather than the usual ``proof as program'' principle. In
particular, proofs can be composed as programs and with programs to form
proof tactics.

Programming in our language is similar to programming in any dialect of ML.
For example, we can define the type of unary natural numbers, and the
corresponding addition function.
\begin{lstlisting}
type nat = Z[] | S[nat]
let rec add n m = match n with
  | Z[]   $\to$ m
  | S[nn] $\to$ S[add nn m]
\end{lstlisting}
We can then prove properties of addition such as \texttt{add Z[] n} $\equiv$
\texttt{n} for all \texttt{n} in \texttt{nat}. This property can be expressed
using a dependent product over \texttt{nat} and an equality type.
%
%
\begin{lstlisting}
let addZeroN n:nat : (add Z[] n $\equiv$ n) = $\scissors$
\end{lstlisting}
The term $\scissors$ (to be pronounced ``scissors'') can be introduced
whenever the goal is derivable from the context with equational reasoning.
Our first proof is immediate since we have \texttt{add Z[] n} $\equiv$
\texttt{n} by definition of \texttt{add}.

Let us now show that \texttt{add n Z[]} $\equiv$ \texttt{n} for every
\texttt{n} in \texttt{nat}. Although the statement of this property is
similar to the previous one, its proof is slightly more complex and
requires case analysis and induction.
\begin{lstlisting}
let rec addNZero n:nat : (add n Z[] $\equiv$ n) =
  match n with
  | Z[]   $\to$ $\scissors$
  | S[nn] $\to$ let r = addNZero nn in $\scissors$
\end{lstlisting}
In the \texttt{S[nn]} case, the induction hypothesis (i.e. \texttt{add nn Z[]}
$\equiv$ \texttt{nn}) is obtained by a recursive call. It is then used to
conclude the proof using equational reasoning.
Note that in our system, programs that are considered as proofs need to go
through a termination checker. Indeed, a looping program could be used to
prove anything otherwise. The proofs \texttt{addZeroN} and \texttt{addNZero}
are obviously terminating, and hence valid.

Several difficulties arise when combining call-by-value evaluation,
side-effects, dependent products and equality over programs. Most notably,
the expressiveness of dependent products is weakened by the value
restriction: elimination of dependent product can only happen on
arguments that are syntactic values. In other words, the typing rule
\begin{prooftree}
\AxiomC{$\Gamma \vdash t : \Pi_{a:A}\;B$}
\AxiomC{$\Gamma \vdash u : A$}
\BinaryInfC{$\Gamma \vdash t\; u : B[a := u]$}
\end{prooftree}
\smallskip
cannot be proved safe if $u$ is not a value.
This means, for example, that we cannot derive a proof of
\texttt{add (add Z[] Z[]) Z[]} $\equiv$ \texttt{add Z[] Z[]} by applying
\texttt{addNZero} (which has type $\Pi$\texttt{n:nat} $($\texttt{add n Z[]}
$\equiv$ \texttt{n}$)$) to \texttt{add Z[] Z[]} since it is not a value.
The restriction affects regular programs in a similar way. For instance,
it is possible to define a list concatenation function \texttt{append} with
the following type.
\begin{center}
$\Pi$\texttt{n:nat} $\Pi$\texttt{m:nat List(n)} $\Rightarrow$ \texttt{List(m)}
$\Rightarrow$ \texttt{List(add n m)}
\end{center}
However, the \texttt{append} function cannot be used to implement a function
concatenating three lists. Indeed, this would require being able to provide
append with a non-value natural number argument of the form \texttt{add n m}.

Surprisingly, the equality types and the underlying observational
equivalence relation provide a solution to the lack of expressiveness of
dependent products. The value restriction can be relaxed to obtain the rule
\begin{prooftree}
\AxiomC{$\Gamma, u \equiv v \vdash t : \Pi_{a:A}\;B$}
\AxiomC{$\Gamma, u \equiv v \vdash u : A$}
\BinaryInfC{$\Gamma, u \equiv v \vdash t\; u : B[a := u]$}
\end{prooftree}
\smallskip
which only requires $u$ to be equivalent to some value $v$. The same idea
can be applied to every rule requiring value restriction. The obtained
system is conservative over the one with the syntactic restriction.
Indeed, finding a value equivalent to a term that is already a value can
always be done using the reflexivity of the equivalence relation.

Although the idea seems simple, proving the soundness of the new typing rules
semantically is surprisingly subtle. A model is built using classical
realizability techniques in which the interpretation of a type $A$ is spread
among two sets: a set of values $\llbracket A \rrbracket$ and a set of terms
$\llbracket A \rrbracket^{\bot\bot}$. The former contains all values that
should have type $A$. For example, $\llbracket$\texttt{nat}$\rrbracket$
should contain the values of the form \texttt{S[S[...Z[]...]]}. The set
$\llbracket A \rrbracket^{\bot\bot}$ is the completion of
$\llbracket A \rrbracket$ with all the terms behaving like values of
$\llbracket A \rrbracket$ (in the observational sense).
To show that the relaxation of the value restriction is sound, we need the
values of $\llbracket A \rrbracket^{\bot\bot}$ to also be  in
$\llbracket A \rrbracket$. In other words, the completion operation should
not introduce new values. To obtain this property, we need to
extend the language with a new, non-computable instruction internalizing
equivalence. This new instruction is only used to build the model, and will
not be available to the user (nor will it appear in an implementation).

\subsection*{About effects and value restriction}

A soundness issue related to side-effects and call-by-value evaluation
arose in the seventies with the advent of ML. The problem stems from a
bad interaction between side-effects and Hindley-Milner polymorphism. It was
first formulated in terms of references \cite[section 2]{wright2}, and many
alternative type systems were designed (e.g. \cite{tofte, damas, leroy,
leroycbn}). However, they all introduced a complexity that contrasted with
the elegance and simplicity of ML's type system (for a detailed account, see
\cite[section 2]{wright1} and \cite[section 2]{garigue}).

A simple and elegant solution was finally found by Andrew Wright in the
nineties. He suggested restricting generalization in let-bindings\footnote{In
ML the polymorphism mechanism is strongly linked with let-bindings. In OCaml
syntax, they are expressions of the form \texttt{let x = u in t}.}
to cases where the bound term is a syntactic value \cite{wright1, wright2}.
In slightly more expressive type systems, this restriction appears in the
typing rule for the introduction of the universal quantifier. The usual
rule
\begin{prooftree}
\AxiomC{$\Gamma \vdash t : A$}
\AxiomC{$X \not\in FV(\Gamma)$}
\BinaryInfC{$\Gamma \vdash t : \forall X\;A$}
\end{prooftree}
\smallskip
cannot be proved safe (in a call-by-value system with side-effects) if $t$ is
not a syntactic value.
Similarly, the elimination rule for dependent product (shown previously)
requires value restriction. It is possible to exhibit a counter-example
breaking the type safety of our system if it is omitted \cite{long}.

In this paper, we consider control structures, which have been shown to give
a computational interpretation to classical logic by Timothy Griffin
\cite{griffin}.
In 1991, Robert Harper and Mark Lillibridge found a complex program breaking
the type safety of ML extended with Lisp's \emph{call/cc} \cite{smllist}.
As with references, value restriction solves the inconsistency and yields a
sound type system. 
Instead of using control operators like \emph{call/cc}, we adopt the
syntax of Michel Parigot's $\lambda\mu$-calculus \cite{parigot}. Our language
hence contains a new binder $\mu \alpha\,t$ capturing the continuation in the
$\mu$-variable $\alpha$. The continuation can then be restored in $t$ using
the syntax $u\ast\alpha$\footnote{This was originally denoted $[\alpha]u$.}.
In the context of the $\lambda\mu$-calculus, the soundness issue arises when
evaluating $t\,(\mu\alpha\,u)$ when $\mu\alpha\,u$ has a polymorphic type.
Such a situation cannot happen with value restriction since $\mu\alpha\,u$ is
not a value.

\subsection*{Main results}

The main contribution of this paper is a new approach to value restriction.
The syntactic restriction on terms is replaced by a semantical restriction
expressed in terms of an observational equivalence relation denoted
$(\equiv)$. Although this approach seems simple, building a model to prove
soundness semantically (theorem \ref{adequacy}) is surprisingly subtle.
Subject reduction is not required here, as  our model construction implies
type safety (theorem \ref{safety}). Furthermore our type system is consistent
as a logic (theorem \ref{consistency}).

In this paper, we restrict ourselves to a second order type system but it
can easily be extended to higher-order. Types are built from two basic sorts
of objects: \emph{propositions} (the types themselves) and \emph{individuals}
(untyped terms of the language). Terms appear in a restriction operator
$A \restriction t \equiv u$ and a membership predicate $t \in A$. The former
is used to define the equality types (by taking $A = \top$) and the latter is
used to encode dependent product.
$$ \Pi_{a : A} B \quad:=\quad \forall a (a \in A \Rightarrow B)$$
Overall, the higher-order version of our system is similar to a
Curry-style HOL with ML programs as individuals. It does not allow the
definition of a type which structure depends on a term (e.g. functions with
a variable number of arguments). Our system can thus be placed between HOL
(a.k.a. $F_\omega$) and the pure calculus of constructions (a.k.a. $CoC$) in
(a Curry-style and classical version of) Barendregt's $\lambda$-cube.

Throughout this paper we build a realizability model \`a la Krivine
\cite{krivine} based on a call-by-value abstract machine. As a consequence,
formulas are interpreted using three layers (values, stacks and terms)
related via orthogonality (definition \ref{orthodef}). The crucial property
(theorem \ref{biortho}) for the soundness of semantical value restriction is
that
$$ \phi^{\bot\bot} \cap \Lambda_v = \phi$$
for every set of values $\phi$ (closed under $(\equiv)$). $\Lambda_v$ denotes
the set of all values and $\phi^\bot$ (resp. $\phi^{\bot\bot}$) the set of all
stacks (resp. terms) that are compatible with every value in $\phi$
(resp. stacks in $\phi^\bot$). To obtain a model satisfying this property, we
need to extend our programming language with a term $\delta_{v,w}$ which
reduction depends on the observational equivalence of two values $v$ and $w$.

\subsection*{Related work}

To our knowledge, combining call-by-value evaluation, side-effects and
dependent products has never been achieved before. At least not for a
dependent product fully compatible with effects and call-by-value. For
example, the Aura language \cite{aura} forbids dependency on terms that
are not values in dependent applications. Similarly, the $F^\star$ language
\cite{fstar} relies on (partial) let-normal forms to enforce values
in argument position. Daniel Licata and Robert Harper have defined a notion
of positively dependent types \cite{licata} which only allow dependency over
strictly positive types.
Finally, in language like ATS \cite{ats} and DML \cite{dml} dependent types
are limited to a specific index language.

The system that seems the most similar to ours is NuPrl \cite{nuprl},
although it is inconsistent with classical reasoning. NuPrl accommodates
an observational equivalence $(\sim)$ (Howe's ``squiggle''
relation \cite{howe}) similar to our $(\equiv)$ relation. It is partially
reflected in the syntax of the system. Being based on a Kleene style
realizability model, NuPrl can also be used to reason about untyped terms.

The central part of this paper consists in a classical realizability model
construction in the style of Jean-Louis Krivine \cite{krivine}. We rely on a
call-by-value presentation which yields a model in three layers (values,
terms and stacks). Such a technique has already been used to account for
classical ML-like polymorphism in call-by-value in the work of Guillaume
Munch-Maccagnoni \cite{munch}\footnote{Our theorem \ref{biortho} seems
unrelated to lemma 9 in Munch-Maccagnoni's work \cite{munch}.}. It is here
extended to include dependent products.

The most actively developed proof assistants following the Curry-Howard
correspondence are Coq and Agda \cite{coq,agda}. The former is based on
Coquand and Huet's calculus of constructions and the latter on Martin-Löf's
dependent type theory \cite{coc,ml82}. These two constructive theories
provide dependent types, which allow the definition of very expressive
specifications. Coq and Agda do not directly give a computational
interpretation to classical logic. Classical reasoning can only be done
through the definition of axioms such as the law of the excluded middle.
Moreover, these two languages are logically consistent, and hence their
type-checkers only allow terminating programs. As termination checking is
a difficult (and undecidable) problem, many terminating programs are
rejected. Although this is not a problem for formalizing mathematics, this
makes programming tedious.

The TRELLYS project \cite{trellys} aims at providing a language in which
a consistent core can interact with type-safe dependently-typed programming
with general recursion. Although the language defined in \cite{trellys} is
call-by-value and allows effect, it suffers from value restriction like
Aura \cite{aura}. The value restriction does not appear explicitly but is
encoded into a well-formedness judgement appearing as the premise of the
typing rule for application. Apart from value restriction, the main
difference between the language of the TRELLYS project and ours resides in
the calculus itself. Their calculus is Church-style (or explicitly typed)
while ours is Curry-style (or implicitly typed). In particular, their
terms and types are defined simultaneously, while our type system is
constructed on top of an untyped calculus.

Another similar system can be found in the work of Alexandre Miquel
\cite{phdalex}, where propositions can be classical and Curry-style. However
the rest of the language remains Church style and does not embed a full
ML-like language.
The PVS system \cite{pvs96} is similar to ours as it is based on classical
higher-order logic. However this tool does not seem to be a programming
language, but rather a specification language coupled with proof checking
and model checking utilities. It is nonetheless worth mentioning that the
undecidability of PVS's type system is handled by generating proof
obligations. Our system will take a different approach and use a
non-backtracking type-checking and type-inference algorithm.

\section{Syntax, Reduction and Equivalence}\label{sect-syntax}
The language is expressed in terms of a \emph{Krivine Abstract Machine}
\cite{kam}, which is a stack-based machine. It is formed using four
syntactic entities: values, terms, stacks and processes. The distinction
between terms and values is specific to the call-by-value presentation,
they would be collapsed in call-by-name. We require three distinct countable
sets of variables:
\begin{itemize}
\item $\mathcal{V}_\lambda = \{x, y, z...\}$ for $\lambda$-variables,
\item $\mathcal{V}_\mu = \{\alpha, \beta, \gamma...\}$ for $\mu$-variables
      (also called stack variables) and
\item $\mathcal{V}_\iota = \{a, b, c...\}$ for term variables. Term variables
      will be bound in formulas, but never in terms.
\end{itemize}
We also require a countable set $\mathcal{L} = \{l, l_1, l_2...\}$ of
labels to name record fields and a countable set $\mathcal{C} = \{C, C_1,
C_2...\}$ of constructors.
\begin{definition}
Values, terms, stacks and processes are mutually inductively defined by the
following grammars. The names of the corresponding sets are displayed on the
right.
\begin{align*}
  v,w \bnfeq & x \bnfor \lambda x\;t \bnfor C[v]
      \bnfor \{l_i = v_i\}_{i \in I}
      &(\Lambda_v)\\
  t,u \bnfeq & a
      \bnfor v \bnfor t\;u \bnfor \mu \alpha\;t \bnfor p \bnfor v.l
      \bnfor \tcase{v}{C_i[x_i] \to t_i}_{i \in I}
      \bnfor \delta_{v,w}
      &(\Lambda)\\ 
  \pi,\rho \bnfeq & \alpha \bnfor v.\pi \bnfor [t]\pi
      &(\Pi)\\
  p,q \bnfeq & t \ast \pi
      &\hspace{-0.1em}(\Lambda \times \Pi)
\end{align*}
\end{definition}
Terms and values form a variation of the $\lambda\mu$-calculus \cite{parigot}
enriched with ML-like constructs (i.e. records and variants). For technical
purposes that will become clear later on, we extend the language with a
special kind of term $\delta_{v,w}$. It will only be used to build the
model and is not intended to be accessed directly by the user. One may note
that values and processes are terms. In particular, a process of the form
$t \ast \alpha$ corresponds exactly to a named term $[\alpha]t$ in the
most usual presentation of the $\lambda\mu$-calculus. A stack can
be either a stack variable, a value pushed on top of a stack, or a stack
frame containing a term on top of a stack. These two constructors are
specific to the call-by-value presentation, only one would be required
in call-by-name.

\begin{remark}
We enforce values in constructors, record fields, projection and case
analysis. This makes the calculus simpler because only $\beta$-reduction
will manipulate the stack. We can define syntactic sugars such as the
following to hide the restriction from the programmer.
$$ t.l\;:=\;(\lambda x\,x.l)\,t \hspace{2cm} C[t]\;:=\;(\lambda x\,C[x])\,t $$
\end{remark}
\begin{definition}
Given a value, term, stack or process $\psi$ we denote $FV_\lambda(\psi)$
(resp. $FV_\mu(\psi)$, $TV(\psi)$) the set of free $\lambda$-variables (resp.
free $\mu$-variables, term variables) contained in $\psi$. We say that $\psi$
is closed if it does not contain any free variable of any kind. The set of
closed values and the set of closed terms are denoted $\Lambda_v^\ast$
and $\Lambda^\ast$ respectively.
\end{definition}
\begin{remark}
A stack, and hence a process, can never be closed as they always at least
contain a stack variable.
\end{remark}

\subsection{Call-by-value reduction relation}

Processes form the internal state of our abstract machine. They are to be
thought of as a term put in some evaluation context represented using a stack.
Intuitively, the stack $\pi$ in the process $t \ast \pi$ contains the
arguments to be fed to $t$. Since we are in call-by-value the
stack also handles the storing of functions while their arguments are being
evaluated. This is why we need stack frames (i.e. stacks of the form
$[t] \pi$). The operational semantics of our language is given by a relation
$(\succ)$ over processes.
\begin{definition}
The relation $(\succ) \subseteq (\Lambda\times\Pi)^2$ is defined as the
smallest relation satisfying the following reduction rules.
\begin{align*}
t\;u \ast \pi
  \;\;\;\;&\succ\;\;\;
  u \ast [t] \pi\\
v \ast [t] \pi
  \;\;\;\;&\succ\;\;\;
  t \ast v.\pi\\
\lambda x\;t \ast v.\pi
  \;\;\;\;&\succ\;\;\;
  t[x \!:=\! v] \ast \pi\\
\mu \alpha\;t \ast \pi
  \;\;\;\;&\succ\;\;\;
  t[\alpha \!:=\! \pi] \ast \pi\\
p \ast \pi
  \;\;\;\;&\succ\;\;\;
  p\\
  \{l_i = v_i\}_{i \in I}.l_k \ast \pi
  \;\;\;\;&\succ\;\;\;
  v_k \ast \pi & {k \in I}\\
\tcase{C_k[v]}{C_i[x_i] \to t_i}_{i \in I} \ast \pi
  \;\;\;\;&\succ\;\;\;
t_k[x_k \!:=\! v] \ast \pi & {k \in I}
\end{align*}
We will denote $(\succ^{+})$ its transitive closure, $(\succ^{*})$ its
reflexive-transitive closure and $(\succ^k)$ its $k$-fold application.
\end{definition}
The first three rules are those that handle $\beta$-reduction. When the
abstract machine encounters an application, the function is stored in a
stack-frame in order to evaluate its argument first. Once the argument has
been completely computed, a value faces the stack-frame containing the
function. At this point the function can be evaluated and the value is stored
in the stack ready to be consumed by the function as soon as it evaluates to a
$\lambda$-abstraction. A capture-avoiding substitution can then be performed
to effectively apply the argument to the function. The fourth and fifth rules
rules handle the classical part of computation. When a $\mu$-abstraction is
reached, the current stack (i.e. the current evaluation context) is captured
and substituted for the corresponding $\mu$-variable. Conversely, when a process
is reached, the current stack is thrown away and evaluation resumes with the
process. The last two rules perform projection and case analysis in the
expected way. Note that for now, states of the form $\delta_{v,w} \ast \pi$
are unaffected by the reduction relation.
\begin{remark}
For the abstract machine to be simpler, we use right-to-left call-by-value
evaluation, and not the more usual left-to-right call-by-value evaluation.
\end{remark}

\begin{lemma}\label{redcompatall}
The reduction relation $(\succ)$ is compatible with substitutions of variables
of any kind. More formally, if $p$ and $q$ are processes such that $p \succ q$
then:
\begin{itemize}
\item for all $x \in \mathcal{V}_\lambda$ and $v \in \Lambda_v$,
  $p[x := v] \succ q[x := v]$,
\item for all $\alpha \in \mathcal{V}_\mu$ and $\pi \in \Pi$,
  $p[\alpha := \pi] \succ q[\alpha := \pi]$,
\item for all $a \in \mathcal{V}_\iota$ and $t \in \Lambda$,
  $p[a := t] \succ q[a := t]$.
\end{itemize}
Consequently, if $\sigma$ is a substitution for variables of any kind and if
$p \succ q$ (resp. $p \succ^{*} q$, $p \succ^{+} q$, $p \succ^k q$) then 
$p\sigma \succ q\sigma$ (resp. $p\sigma \succ^{*} q\sigma$,
$p\sigma \succ^{+} q\sigma$, $p\sigma \succ^k q\sigma$).
\end{lemma}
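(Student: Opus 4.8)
The plan is to prove Lemma~\ref{redcompatall} by a case analysis on the reduction rule used to derive $p \succ q$. Since $(\succ)$ is defined as the smallest relation closed under the seven reduction rules (and there is no congruence/contextual closure — reduction acts only at the top of a process), the relation is generated directly by those rules, so a single case split on which rule applies suffices; no auxiliary structural induction on the shape of $p$ is needed. For each kind of substitution $\sigma$ (for $\lambda$-variables, $\mu$-variables, or term variables) I would check that applying $\sigma$ to both sides of each rule yields another instance of the same rule.

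First I would fix a substitution, say $[x := v]$ with $v \in \Lambda_v$, and go through the rules one by one. The decomposition rules $t\,u \ast \pi \succ u \ast [t]\pi$ and $v \ast [t]\pi \succ t \ast v.\pi$ and the throw-away rule $p \ast \pi \succ p$ are purely syntactic rearrangements, so substitution commutes with them trivially once one observes that substitution is defined to distribute over all the term/stack/process constructors ($t\,u$, $[t]\pi$, $v.\pi$, $t \ast \pi$). The projection rule and the case-analysis rule require noting that substitution commutes with selecting the $k$-th field or branch and that the side condition $k \in I$ is unaffected. The genuinely delicate rules are the two that perform a substitution themselves: the $\beta$-rule $\lambda x'\,t \ast w.\pi \succ t[x' := w] \ast \pi$ and the $\mu$-rule $\mu\alpha\,t \ast \pi \succ t[\alpha := \pi] \ast \pi$. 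Here I would invoke the standard substitution lemma, namely that under the usual capture-avoiding (Barendregt) convention one may assume the bound variable $x'$ (resp.\ $\alpha$) is chosen distinct from $x$ and from the free variables of $v$, so that $\big(t[x' := w]\big)[x := v] = \big(t[x := v]\big)\big[x' := w[x := v]\big]$, and symmetrically for the $\mu$-case where $[\alpha := \pi][x := v]$ can be reordered because $x \notin FV_\mu$. The corresponding identities for the $\mu$-substitution $[\alpha := \pi]$ and the term-variable substitution $[a := t]$ are entirely analogous, using that term variables are never bound in terms (so no capture can occur in the $\delta_{v,w}$ or other constructors) and that $\mu$-substitution does not interfere with $\lambda$-bindings.

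The main obstacle I anticipate is precisely the bookkeeping in these two self-substituting rules: making the commutation of two substitutions rigorous requires the auxiliary substitution lemmas and the $\alpha$-renaming conventions, and one must be careful that the freshly substituted value (or stack) stays a value (resp.\ a stack) so that the result is still a well-formed instance of the rule — e.g.\ $v[x := v']$ is again a value because values are closed under the value constructors and substitution of a value for a variable. After establishing the single-step statement for each of the three variable sorts, the final ``Consequently'' clause follows by a routine induction on the length of the reduction sequence: the multi-step cases $(\succ^{*})$, $(\succ^{+})$, $(\succ^{k})$ are immediate consequences of the one-step case applied repeatedly, and an arbitrary substitution $\sigma$ is handled by composing the three basic substitution results (since any $\sigma$ is a composition of single-variable substitutions, and each step is preserved by each of them).
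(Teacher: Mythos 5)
Your proposal is correct and follows exactly the paper's approach: the paper's proof is the one-line ``Immediate case analysis on the reduction rules,'' and your case split over the seven rules (with the substitution-composition lemma for the $\beta$- and $\mu$-rules under the usual capture-avoiding conventions) is precisely the detail that one-liner elides.
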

\begin{proof}
  Immediate case analysis on the reduction rules.
\end{proof}
\medskip

We are now going to give the vocabulary that will be used to describe some
specific classes of processes. In particular we need to identify processes
that are to be considered as the evidence of a successful computation, and
those that are to be recognised as expressing failure.
\begin{definition}
A process $p \in \Lambda\times\Pi$ is said to be:
\begin{itemize}
\item \emph{final} if there is a value $v \in \Lambda_v$ and a stack variable
  $\alpha \in \mathcal{V}_\mu$ such that $p = v \ast \alpha$,
\item \emph{$\delta$-like} if there are values $v, w \in \Lambda_v$ and a
  stack $\pi \in \Pi$ such that $p = \delta_{v,w} \ast \pi$,
\item \emph{blocked} if there is no $q \in \Lambda\times\Pi$ such that
  $p \succ q$,
\item \emph{stuck} if it is not final nor $\delta$-like, and if for every
  substitution $\sigma$, $p\sigma$ is blocked,
\item \emph{non-terminating} if there is no blocked process
  $q \in \Lambda\times\Pi$ such that $p \succ^{*} q$.
\end{itemize}
\end{definition}

\begin{lemma}\label{redstable}
Let $p$ be a process and $\sigma$ be a substitution for variables of any kind.
If $p$ is $\delta$-like (resp. stuck, non-terminating) then $p\sigma$ is also
$\delta$-like (resp. stuck, non-terminating).
\end{lemma}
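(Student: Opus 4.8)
The plan is to treat the three properties separately, in increasing order of difficulty, throughout using the fact that substitution preserves syntactic categories (values stay in $\Lambda_v$, stacks stay in $\Pi$) and the compatibility of reduction with substitution from Lemma~\ref{redcompatall}.

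For the $\delta$-like case I would argue directly. If $p = \delta_{v,w}\ast\pi$, then since substitution commutes with the term structure, $p\sigma = \delta_{v\sigma,w\sigma}\ast(\pi\sigma)$ with $v\sigma,w\sigma\in\Lambda_v$ and $\pi\sigma\in\Pi$, so $p\sigma$ is again $\delta$-like. This is immediate.

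For the stuck case, recall that \emph{stuck} means not final, not $\delta$-like, and $p\tau$ blocked for every substitution $\tau$. The third clause for $p\sigma$ follows from substitution composition: for any $\tau$ one has $(p\sigma)\tau = p\rho$ where $\rho$ is the substitution sending each variable $x$ to $(\sigma(x))\tau$ (and likewise on stack and term variables), so $(p\sigma)\tau$ is blocked because $p$ is stuck. The delicate points are the first two clauses. I would first establish a structural fact: an inspection of the grammar shows that substitution can turn a non-value term $t$ into a value, or into a $\delta$-term, only when $t$ is a term variable $a$ (every other term former is preserved by $\sigma$). Using this, suppose toward a contradiction that $p\sigma = (t\sigma)\ast(\pi\sigma)$ is final, i.e. $t\sigma$ is a value and $\pi\sigma$ a stack variable. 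Since $\pi\sigma$ is a stack variable, $\pi$ must already be a stack variable $\beta$. If $t$ is itself a value then $p = t\ast\beta$ is already final, contradicting stuckness; and if $t = a$, then the instance $\tau$ sending $a$ to $\lambda x\,x$ and $\beta$ to $w.\gamma$ makes $p\tau$ reduce, contradicting that every instance of $p$ is blocked. The proof that $p\sigma$ is not $\delta$-like is entirely parallel, with $t = \delta_{v,w}$ or $t = a$ as the two cases.

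For the non-terminating case, I would first observe that a non-terminating $p$ admits an infinite reduction sequence $p \succ p_1 \succ p_2 \succ \cdots$, since every process reachable from $p$ is itself non-terminating, hence not blocked, hence reducible. Applying Lemma~\ref{redcompatall} step by step yields the infinite sequence $p\sigma \succ p_1\sigma \succ p_2\sigma \succ \cdots$. To conclude that $p\sigma$ reaches no blocked process, I would invoke determinism of $(\succ)$, which is immediate from the reduction rules since each process matches at most one left-hand side: any reduction $p\sigma \succ^{*} q'$ must follow this unique sequence, so $q' = p_n\sigma$ for some $n$, and $p_n\sigma \succ p_{n+1}\sigma$ shows $q'$ is not blocked.

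The main obstacle is the stuck case, specifically verifying that $p\sigma$ is neither final nor $\delta$-like. The real content there is that the ``every instance is blocked'' clause in the definition of stuck is exactly what excludes the otherwise fatal situation in which a term variable in head position is instantiated by $\sigma$ into a value or a $\delta$-term; without that clause the statement would be false, as $a\ast\alpha$ becomes final under $[a := v]$.
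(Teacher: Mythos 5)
Your proof is correct, and it fills in exactly the details that the paper dismisses with ``Immediate by definition'': the key observations — that the universally quantified blockedness clause in the definition of \emph{stuck} is what rules out a head term variable being instantiated into a value or a $\delta$-term, and that determinism of $(\succ)$ together with Lemma~\ref{redcompatall} transports non-termination — are the right ones and are sound. No gap; this is the same argument the paper intends, just made explicit.
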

\begin{proof}
  Immediate by definition.
\end{proof}

\begin{lemma}\label{remark}
A stuck state is of one of the following forms, where $k \notin I$.
\begin{center}
\begin{tabular}{c}
$
C[v].l \ast \pi
\quad\quad\quad
(\lambda x\;t).l \ast \pi
\quad\quad\quad
C[v] \ast w.\pi
\quad\quad\quad
\{l_i = v_i\}_{i \in I} \ast v.\pi
$
\medskip
\\
$
\tcase{\lambda x\;t}{C_i[x_i] \to t_i}_{i \in I} \ast \pi
\quad\quad\quad
\tcase{\{l_i = v_i\}_{i \in I}}{C_j[x_j] \to t_j}_{j \in J} \ast \pi
$
\medskip
\\
$
\tcase{C_k[v]}{C_i[x_i] \to t_i}_{i \in I} \ast \pi
\quad\quad\quad
\{l_i = v_i\}_{i \in I}.l_k \ast \pi
$
\end{tabular}
\end{center}
\end{lemma}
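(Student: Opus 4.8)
The plan is to run a case analysis on the shape of the head term $t$ of the process $p = t \ast \pi$, with a secondary split on the stack $\pi$ (or on a value occurring in $t$) where needed. The guiding observation is that every reduction rule except $v \ast [t]\pi \succ t \ast v.\pi$ fires according to the top symbol of $t$ alone — and, for the $\lambda$-rule, the top symbol of $\pi$. So I would first dispose of the term forms that always reduce: if $t = t_1\,t_2$, $t = \mu\alpha\,t'$, or $t$ is itself a process, then $p$ reduces in one step under the identity substitution by the first, fourth, or fifth rule respectively, hence $p$ is not blocked, and a fortiori not stuck. The two remaining ``trivial'' configurations are excluded outright by the definition of stuck: $t = \delta_{v,w}$ makes $p$ $\delta$-like, and $p = v \ast \alpha$ is final.

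Next I would treat the case where $t$ is a value $v$, splitting on $\pi$. If $\pi = \alpha$ then $p$ is final; if $\pi = [t']\pi'$ then $p$ reduces by the second rule; so the only genuine sub-case is $\pi = w.\pi'$, where I split on $v$. A $\lambda$-abstraction reduces by the third rule, and a variable $x$ is not stuck because the substitution $x := \lambda y\,y$ unblocks it through the third rule; but $C[v']$ and $\{l_i = v_i\}_{i \in I}$ are blocked and remain so under every substitution, yielding exactly the forms $C[v] \ast w.\pi$ and $\{l_i = v_i\}_{i \in I} \ast v.\pi$. The projection case $t = v.l$ and the case-analysis case $t = \tcase{v}{C_i[x_i] \to t_i}_{i \in I}$ are then handled by splitting on $v$: a record projected on a present label, or a match on a constructor $C_k$ with $k \in I$, reduces by the sixth or seventh rule, whereas a variable head is unblocked by substituting the relevant value (a record carrying the projected label, or a constructor matching one of the branches). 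All the leftover shapes — projection of a $\lambda$-abstraction, of a constructor, or of a record on an absent label $l_k$ ($k \notin I$); and a match on a $\lambda$-abstraction, on a record, or on a constructor $C_k$ with $k \notin I$ — are precisely the stuck forms displayed.

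Finally, the bare term variable $t = a$ is never stuck, since substituting $a$ by any application immediately enables the first rule. The hard part, and the reason the statement is delicate, is the two-directional use of the universal quantifier over substitutions in the definition of \emph{stuck}. On one side I must show that the variable-headed configurations $a \ast \pi$ and $x \ast w.\pi$ fail to be stuck by exhibiting a \emph{single} substitution that unblocks them; on the other side, for each configuration claimed to be stuck I must verify that \emph{no} substitution can unblock it. This second half rests on the fact that substitution preserves the head shape of a non-variable value — $\lambda$-abstractions, constructors $C[-]$, and records $\{l_i = -\}_{i \in I}$ are stable under $\sigma$, as is the head constructor of a stack ($v.\pi$ versus $[t]\pi$ versus $\alpha$) — so the premises inspected by the reduction rules (including the constructor name or label for the sixth and seventh rules) can never become satisfied. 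Once this stability is established, the displayed list is seen to be exhaustive and each of its entries genuinely stuck.
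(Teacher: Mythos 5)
Your proposal is correct and is exactly the ``simple case analysis'' the paper itself invokes: split on the head term, then on the stack or the scrutinized value, use the stability of non-variable head shapes under substitution to certify the stuck forms, and exhibit an unblocking substitution for the variable-headed configurations. The only implicit convention you share with the paper is that the branch set $I$ of a case analysis is nonempty (otherwise $\tcase{x}{\dots}_{i\in\emptyset}\ast\pi$ would be stuck but absent from the list), so no genuine divergence or gap relative to the paper's argument.
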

\begin{proof}
  Simple case analysis.
\end{proof}

\begin{lemma}\label{possibilities}
A blocked process $p \in \Lambda\times\Pi$ is either stuck, final,
$\delta$-like, or of one of the following forms.
\begin{center}
\begin{tabular}{c}
$
x.l \ast \pi
\quad\quad\quad
x \ast v.\pi
\quad\quad\quad
\tcase{x}{C_i[x_i] \to t_i}_{i \in I} \ast \pi
\quad\quad\quad
a \ast \pi
$
\end{tabular}
\end{center}
\end{lemma}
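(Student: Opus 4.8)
The statement classifies all blocked processes. The plan is to proceed by exhaustive case analysis on the syntactic shape of a blocked process $p = t \ast \pi$, driven by the structure of the term $t$. Since the reduction relation $(\succ)$ is defined by matching on the head term (and sometimes on the stack), a process fails to reduce precisely when none of the reduction rules applies. So I would enumerate the possible forms of $t$ according to the term grammar $(\Lambda)$ and, for each, determine under what conditions on $\pi$ the process is blocked.

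First I would dispatch the cases where $t$ is a \emph{compound} term whose reduction rule always fires regardless of the stack: if $t = t'\,u$, then $p \succ u \ast [t']\pi$; if $t = \mu\alpha\,t'$, then $p \succ t'[\alpha := \pi] \ast \pi$; and if $t$ is itself a process $p'$, then $p \succ p'$. None of these can be blocked, so they are discarded immediately. This leaves the cases where $t$ is a value, a term variable $a$, a projection $v.l$, a case analysis, or a $\delta_{v,w}$.

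Next I would handle the remaining head forms and read off which branch of the disjunction each lands in. If $t = \delta_{v,w}$, then $p$ is $\delta$-like by definition, done. If $t = a$ is a term variable, $p = a \ast \pi$ is one of the listed forms (and no rule applies since $a$ is neither a value facing a stack-frame nor anything else a rule mentions). The substantive work is in the value and projection/case cases. When $t$ is a value $v$, I would split on the shape of $\pi$: if $\pi = \alpha$, then $p = v \ast \alpha$ is final; if $\pi = [t']\rho$, then the second reduction rule fires, so $p$ is not blocked; if $\pi = w.\rho$, then reduction fires exactly when $v = \lambda x\,t'$, and otherwise (when $v$ is a variable $x$, a constructor $C[v']$, or a record) the process is blocked --- the variable subcase $x \ast v.\pi$ is a listed form, while the constructor and record subcases are exactly the stuck forms catalogued in Lemma \ref{remark}. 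For $t = v.l$ and $t = \tcase{v}{\ldots}$, I would appeal directly to Lemma \ref{remark} and the final reduction rules: reduction applies when $v$ has the matching record/constructor shape with the label/constructor present, and all other subcases (including $v = x$, which gives the listed forms $x.l \ast \pi$ and $\tcase{x}{\ldots} \ast \pi$) are either stuck or in the listed family.

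The only mild subtlety --- and the step I would be most careful about --- is distinguishing the \emph{stuck} cases from the new \emph{listed} cases, since both are blocked but land in different clauses of the disjunction. The distinction is exactly whether the head is a variable ($x$ or $a$): a variable head can become reducible under a substitution (e.g. $x \ast v.\pi$ reduces once $x$ is replaced by a $\lambda$-abstraction), so it fails the ``blocked under every substitution'' condition and is \emph{not} stuck, whereas the forms in Lemma \ref{remark} remain blocked under all substitutions because their heads are already fully-formed mismatched values. Once this variable-versus-value bookkeeping is made explicit, the lemma follows by assembling the cases, with Lemma \ref{remark} supplying the stuck forms directly.
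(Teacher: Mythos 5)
Your proposal is correct and follows exactly the route the paper takes: an exhaustive case analysis on the head term (and, for values, on the stack), with Lemma \ref{remark} supplying the stuck forms and the variable-headed cases accounting for the four listed residual forms. The paper compresses this into the single line ``straight-forward case analysis using lemma \ref{remark}''; your expansion, including the observation that variable heads fail the ``blocked under every substitution'' test and hence are not stuck, is precisely the intended argument.
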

\begin{proof}
Straight-forward case analysis using lemma \ref{remark}.
\end{proof}

\subsection{Reduction of $\delta_{v,w}$ and equivalence}

The idea now is to define a notion of observational equivalence over terms
using a relation $(\equiv)$. We then extend the reduction relation with a rule
reducing a state of the form $\delta_{v,w} \ast \pi$ to $v \ast \pi$ if
$v \not\equiv w$. If $v \equiv w$ then $\delta_{v,w}$ is stuck. With this rule
reduction and equivalence will become interdependent as equivalence will be
defined using reduction.
\begin{definition}
Given a reduction relation $R$, we say that a process $p \in \Lambda\times\Pi$
converges, and write $p \converge_R$, if there is a final state
$q \in \Lambda\times\Pi$ such that $p R^{*} q$ (where $R^{*}$ is the
reflexive-transitive closure of $R$). If $p$ does not converge we say that it
diverges and write $p \diverge_R$. We will use the notations $p \converge_i$
and $p \diverge_i$ when working with indexed notation symbols like $(\red_i)$.
\end{definition}

\begin{definition}
For every natural number $i$ we define a reduction relation $(\red_i)$ and an
equivalence relation $(\equiv_i)$ which negation will be denoted
$(\not\equiv_i)$.
$$ (\red_i) = (\succ) \cup \{(\delta_{v,w} \ast \pi, v \ast \pi) \;|\;
  \exists j < i, v \not\equiv_j w\} $$
$$ (\equiv_i) = \{(t,u) \;|\; \forall j \leq i, \forall \pi, \forall
  \sigma, {t\sigma \ast \pi \converge_j} \Leftrightarrow 
  {u\sigma \ast \pi \converge_j}\} $$
\end{definition}
It is easy to see that $(\red_0) = (\succ)$. For every natural number $i$,
the relation $(\equiv_i)$ is indeed an equivalence relation as it can be
seen as an intersection of equivalence relations. Its negation can be
expressed as follows.
$$ (\not\equiv_i) = \{(t,u), (u,t) \;|\; \exists j \leq i, \exists \pi,
  \exists \sigma, {t\sigma \ast \pi \converge_j} \land
  {u\sigma \ast \pi \diverge_j}\} $$

\begin{definition}
We define a reduction relation $(\red)$ and an equivalence relation $(\equiv)$
which negation will be denoted $(\not\equiv)$.
$$ (\red) = \bigcup_{i\in\bbN} {(\red_i)}
   \;\;\;\;\;\;
   \;\;\;\;\;\;
   (\equiv) = \bigcap_{i\in\bbN} {(\equiv_i)} $$
\end{definition}
These relations can be expressed directly (i.e. without the need of a union or
an intersection) in the following way.
\begin{align*}
(\equiv) &= \{(t,u) \;|\; \forall i, \forall \pi, \forall \sigma,
  {t\sigma \ast \pi \converge_i} \Leftrightarrow 
  {u\sigma \ast \pi \converge_i}\}\\
(\not\equiv) &= \{(t,u), (u,t) \;|\; \exists i, \exists \pi, \exists \sigma,
  {t\sigma \ast \pi \converge_i} \land {u\sigma \ast \pi \diverge_i}\}\\
(\red) &= (\succ) \cup \{(\delta_{v,w} \ast \pi, v \ast \pi) \;|\;
  v \not\equiv w\}
\end{align*}

\begin{remark}
Obviously $(\red_i) \subseteq (\red_{i+1})$ and
$(\equiv_{i+1}) \subseteq (\equiv_i)$.
As a consequence the construction of $(\red_i)_{i\in\bbN}$ and
$(\equiv_i)_{i\in\bbN}$ converges. In fact $(\red)$ and $(\equiv)$ form a
fixpoint at ordinal $\omega$. Surprisingly, this property is not explicitly
required.
\end{remark}

\begin{theorem}\label{equivpole}
Let $t$ and $u$ be terms. If $t \equiv u$ then for every stack $\pi \in \Pi$
and substitution $\sigma$ we have $t\sigma \ast \pi \converge_{\red}
\Leftrightarrow u\sigma \ast \pi \converge_{\red}$.
\end{theorem}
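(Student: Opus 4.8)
The plan is to reduce the statement about the limit relation $(\red)$ to statements about the stratified relations $(\red_i)$, for which the hypothesis $t \equiv u$ gives information directly. The bridge is the following key observation: for every process $p$,
$$ p \converge_{\red} \quad\Longleftrightarrow\quad \exists i \in \bbN,\; p \converge_i. $$
Once this equivalence is available, the theorem is immediate, since by the direct characterisation of $(\equiv)$ given just before the statement, $t \equiv u$ unfolds to the assertion that $t\sigma \ast \pi \converge_i \Leftrightarrow u\sigma \ast \pi \converge_i$ holds for \emph{every} $i$, every $\pi$ and every $\sigma$.

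First I would prove the key observation. The direction from right to left is easy: since $(\red_i) \subseteq (\red)$ for every $i$, any reduction sequence witnessing $p \converge_i$ is also a reduction sequence for $(\red)$, hence $p \converge_{\red}$. The converse carries the real content. Assume $p \converge_{\red}$, so that there is a finite sequence $p = p_0 \red p_1 \red \dots \red p_n = q$ with $q$ final. By definition $(\red) = \bigcup_{i} (\red_i)$, so each pair $(p_k, p_{k+1})$ belongs to some $(\red_{i_k})$. Because the family $(\red_i)_{i \in \bbN}$ is increasing (the remark preceding the theorem), setting $i = \max_k i_k$ places \emph{every} step of the sequence inside the single relation $(\red_i)$. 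Hence $p$ reduces to the final state $q$ using only $(\red_i)$-steps, i.e. $p \converge_i$.

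With the key observation established, fix a stack $\pi$ and a substitution $\sigma$, and reason as follows.
\begin{align*}
t\sigma \ast \pi \converge_{\red}
  &\;\Leftrightarrow\; \exists i,\; t\sigma \ast \pi \converge_i\\
  &\;\Leftrightarrow\; \exists i,\; u\sigma \ast \pi \converge_i\\
  &\;\Leftrightarrow\; u\sigma \ast \pi \converge_{\red}
\end{align*}
The outer equivalences are instances of the key observation, and the middle one follows from $t \equiv u$: for each fixed $i$ the hypothesis yields $t\sigma \ast \pi \converge_i \Leftrightarrow u\sigma \ast \pi \converge_i$, so a witness $i$ on one side is a witness on the other. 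This proves the theorem.

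The only delicate point is the $\Rightarrow$ direction of the key observation, and it rests entirely on two facts: that convergence is witnessed by a \emph{finite} reduction sequence, and that the stratification $(\red_i)_{i\in\bbN}$ is monotone. Finiteness is what lets us replace the possibly unbounded collection of indices $\{i_k\}$ used along the sequence by their maximum, and monotonicity is what guarantees that this single index still accounts for every step. I expect no further subtlety: the apparent circularity between $(\red)$ and $(\equiv)$ (the reduction of $\delta_{v,w}$ depends on $(\equiv)$, which is itself defined through reduction) does not intervene here, since we never analyse \emph{why} a given step is a $(\red_i)$-step, only that it lives at some finite level.
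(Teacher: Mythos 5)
Your proof is correct and follows essentially the same route as the paper: extract a finite level $i_0$ at which $t\sigma \ast \pi$ converges, transfer it to $u\sigma \ast \pi$ via the direct characterisation of $(\equiv)$, and go back up to $(\red)$. The only difference is that you explicitly justify the step the paper dismisses as ``by definition'' --- namely that $p \converge_\red$ implies $p \converge_i$ for some single $i$ --- via the max-over-a-finite-sequence argument together with monotonicity of $(\red_i)_{i\in\bbN}$, which is exactly the right justification.
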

\begin{proof}
We suppose that $t \equiv u$ and we take $\pi_0 \in \Pi$ and a substitution
$\sigma_0$. By symmetry we can assume that
${t\sigma_0 \ast \pi_0} \converge_\red$ and show that
${u\sigma_0 \ast \pi_0} \converge_\red$. By definition there is $i_0 \in \bbN$
such that ${t\sigma_0 \ast \pi_0} \converge_{i_0}$. Since $t \equiv u$ we know
that for every $i \in \bbN$, $\pi \in \Pi$ and substitution $\sigma$ we have
${t\sigma \ast \pi} \converge_i \Leftrightarrow {u\sigma \ast \pi}
\converge_i$. This is true in particular for $i = i_0$, $\pi = \pi_0$ and
$\sigma = \sigma_0$. We hence obtain ${u\sigma_0 \ast \pi_0}
\converge_{i_0}$ which give us ${u\sigma_0 \ast \pi_0} \converge_\red$.
\end{proof}
\begin{remark}
The converse implication is not true in general: taking
$t = \delta_{\lambda x\,x,\{\}}$ and $u = \lambda x\,x$ gives a
counter-example. More generally
${p\converge_\red} \Leftrightarrow {q\converge_\red}$ does not necessarily
imply  ${p\converge_i} \Leftrightarrow {q\converge_i}$ for all
$i\in\mathbb{N}$. 
\end{remark}
\begin{corollary}\label{eqconvconv}
Let $t$ and $u$ be terms and $\pi$ be a stack. If $t \equiv u$ and
${t \ast \pi} \converge_\red$ then ${u \ast \pi} \converge_\red$.
\end{corollary}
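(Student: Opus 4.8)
The plan is to obtain the corollary as an immediate instance of Theorem~\ref{equivpole}. That theorem already establishes, under the hypothesis $t \equiv u$, the full biconditional $t\sigma \ast \pi \converge_\red \Leftrightarrow u\sigma \ast \pi \converge_\red$ quantified over \emph{all} stacks $\pi$ and \emph{all} substitutions $\sigma$. The corollary is simply the specialization of this statement to the identity (empty) substitution, retaining only the left-to-right direction.

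Concretely, I would first invoke Theorem~\ref{equivpole} with the given $t$, $u$ and $\pi$, choosing $\sigma$ to be the empty substitution so that $t\sigma = t$ and $u\sigma = u$. This yields the equivalence $t \ast \pi \converge_\red \Leftrightarrow u \ast \pi \converge_\red$. Then, combining the hypothesis $t \ast \pi \converge_\red$ with the forward implication of this equivalence, I would conclude $u \ast \pi \converge_\red$, which is exactly the desired statement.

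There is no genuine obstacle here: all the substantive work has already been carried out in Theorem~\ref{equivpole}, which handled both the quantification over substitutions and the passage from the indexed convergences $\converge_i$ to convergence under the full reduction relation $\converge_\red$. The corollary merely records the convenient substitution-free consequence, which is the form in which the result will be applied later in the development.
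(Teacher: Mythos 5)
Your proof is correct and matches the paper's own argument exactly: the paper also derives the corollary by instantiating Theorem~\ref{equivpole} with the given stack $\pi$ and the empty substitution, then keeping only the forward implication. Nothing further is needed.
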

\begin{proof}
Direct consequence of theorem \ref{equivpole} using $\pi$ and an empty
substitution.
\end{proof}

\subsection{Extensionality of the language}

In order to be able to work with the equivalence relation $(\equiv)$, we need
to check that it is extensional. In other words, we need to be able to replace
equals by equals at any place in terms without changing their observed
behaviour. This property is summarized in the following two theorems.

\begin{theorem}\label{extval}
Let $v$ and $w$ be values, $E$ be a term and $x$ be a $\lambda$-variable. If
$v \equiv w$ then $E[x := v] \equiv E[x := w]$.
\end{theorem}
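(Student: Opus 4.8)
The plan is to unfold the statement to its operational content and then establish a single, uniform congruence property by a nested induction. First I would fix an index $i$, a stack $\pi$ and a substitution $\sigma$ and, using the direct characterisation of $(\equiv)$, reduce the goal to proving the equivalence of $(E[x:=v])\sigma \ast \pi \converge_i$ and $(E[x:=w])\sigma \ast \pi \converge_i$. To deal with the fact that $v$ may occur several times and deep inside $E$, I would introduce the least symmetric, substitutive congruence $\sim$ (on values, terms, stacks and processes) containing the pair $(v,w)$. Since $\sim$ is a substitutive congruence containing $(v,w)$, we get $E[x:=v] \sim E[x:=w]$ and hence $(E[x:=v])\sigma \ast \pi \sim (E[x:=w])\sigma \ast \pi$, so the theorem follows from the key claim $C(i)$: for all $\sim$-related processes $P \sim P'$, $P \converge_i$ implies $P' \converge_i$ (the converse being given by the symmetry of $\sim$).

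I would prove $C(i)$ by strong induction on $i$ and, for each $i$, by an inner induction on the length of a converging $\red_i$-reduction of $P$, with a case analysis on its first step (the possible shapes of a process being given by Lemmas \ref{remark} and \ref{possibilities}). Whenever the first step does not inspect one of the two distinguished values — the stack-building rules, the $\mu$-rule and the process rule, together with every projection, case or application whose inspected value is not $v$ or $w$ (so $P'$ carries a value with the same head) — the very same rule applies to $P'$, yielding $P_1 \sim P_1'$ with a strictly shorter reduction, and I conclude by the inner induction hypothesis, using that $\sim$ is stable under the substitutions performed by $\beta$-reduction.

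The two delicate situations are when the inspected value is one of the distinguished ones and when the step is a $\delta$-reduction. In the first, say $P = v.l \ast \rho$ with $P' = w.l \ast \rho'$ and $\rho \sim \rho'$ (application and case analysis being analogous): since $P \converge_i$, the value $v$ must actually be a record, so I can take the real step $P \succ v_l \ast \rho$ and, as $v_l \ast \rho \sim v_l \ast \rho'$ with a shorter reduction, obtain $v.l \ast \rho' \converge_i$ by the inner induction hypothesis; it then remains to convert the head, which I do by reifying the inspecting context as a stack: from $v \ast [\lambda z\,z.l]\rho' \succ^{*} v.l \ast \rho'$ and the hypothesis $v \equiv w$ instantiated at the stack $[\lambda z\,z.l]\rho'$ I get $w.l \ast \rho' \converge_i$, that is $P' \converge_i$. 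The analogous frames $[\lambda z\,\tcase{z}{\ldots}]\rho'$ and the plain stack $u.\rho'$ handle case analysis and application. The crucial point is that this last conversion uses $(\equiv)$ directly and does not recurse, so the induction measure is respected.

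For a $\delta$-reduction $P = \delta_{a,b} \ast \rho \red_i a \ast \rho$, fired because $a \not\equiv_j b$ for some $j < i$, I have $a \sim a'$, $b \sim b'$ and $\rho \sim \rho'$. Here the induction on $i$ is essential: the outer induction hypothesis $C(k)$ for $k \le j < i$ gives $a \equiv_j a'$ and $b \equiv_j b'$, so the transitivity of the equivalence $(\equiv_j)$ turns $a \not\equiv_j b$ into $a' \not\equiv_j b'$; the $\delta$-rule therefore fires on $P'$ as well, and the inner induction hypothesis applied to $a \ast \rho \sim a' \ast \rho'$ concludes. I expect the main obstacle to be precisely this interdependence of reduction and equivalence — $\red_i$ depends on $\not\equiv_j$ for $j < i$ while $\equiv_i$ depends on $\converge_i$ — which is what forces the stratification by the index $i$ and the careful ordering of the two inductions, so that every appeal to equivalence made while firing a $\delta$ occurs at a strictly smaller level.
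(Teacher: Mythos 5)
Your argument is essentially sound, but it takes a far heavier route than the paper's, and in effect re-derives for $\lambda$-variables the machinery that is only needed for term variables. The paper's own proof of this theorem is a few lines: it argues by contraposition and observes that, because $(\equiv_i)$ quantifies over \emph{all} stacks and \emph{all} substitutions, the whole context $E[x:=\cdot]$ can be reified as the single stack $\pi_0=[\lambda x\;E\sigma]\pi$; indeed $v\sigma\ast\pi_0 \succ^2 E\sigma[x:=v\sigma]\ast\pi$ and likewise for $w$, so any $(i,\pi,\sigma)$ separating $E[x:=v]$ from $E[x:=w]$ immediately yields $(i,\pi_0,\sigma)$ separating $v$ from $w$ --- no induction on $i$, on $E$, or on reduction sequences. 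You actually hold this key in your hand (you use the reification trick locally, to convert the head of an inspected occurrence) but do not notice that it disposes of the theorem outright: precisely because $x$ is a $\lambda$-variable and $v,w$ are values, every occurrence of the substituted value arises from pushing it onto that one stack. Your nested induction --- outer on $i$ to break the circularity between $(\red_i)$ and $(\not\equiv_j)$, inner on the length of the converging reduction, with the $\delta$-case resolved by the outer hypothesis plus transitivity of $(\equiv_j)$ --- is exactly the shape of the paper's Lemma \ref{aextlem}, which is what Theorem \ref{extterm} genuinely requires: there the substituted objects are arbitrary terms replacing a term variable $a$, the machine can repeatedly reach states $a\ast\pi$ whose continuations depend on which term was plugged in, and no single stack reifies the context. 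So your proof buys a generality not needed here, at the cost of some delicate details you would still have to tighten: $\sim$ should be the reflexive, symmetric, \emph{compatible but not transitive} closure of $\{(v,w)\}$ (with a transitive congruence your case split ``same head former unless the two sides are literally $v$ and $w$'' is not exhaustive), and the passage from $w\ast[\lambda z\,z.l]\rho'\converge_i$ to $w.l\ast\rho'\converge_i$ silently uses determinism of $(\red_i)$. Both points are fixable, but the two-step official argument is worth internalizing.
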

\begin{proof}
We are going to prove the contrapositive so we suppose
$E[x := v] \not\equiv E[x := w]$ and show $v \not\equiv w$. By definition
there is $i \in \bbN$, $\pi \in \Pi$ and a substitution $\sigma$ such
that $(E[x := v])\sigma \ast \pi \converge_i$ and
$(E[x := w])\sigma \ast \pi \diverge_i$ (up to symmetry). Since we can
rename $x$ in such a way that it does not appear in $dom(\sigma)$, we can
suppose $E\sigma[x := v\sigma] \ast \pi \converge_i$ and
$E\sigma[x := w\sigma] \ast \pi \diverge_i$.
In order to show $v \not\equiv w$ we need to find $i_0 \in \bbN$,
$\pi_0 \in \Pi$ and a substitution $\sigma_0$ such that
$v\sigma_0 \ast \pi_0 \converge_{i_0}$ and
$w\sigma_0 \ast \pi_0 \diverge_{i_0}$ (up to symmetry). We take $i_0 = i$,
$\pi_0 = [\lambda x\;E\sigma]\pi$ and $\sigma_0 = \sigma$. These
values are suitable since by definition
${v\sigma_0 \ast \pi_0 } \red_{i_0} {E\sigma[x := v\sigma] \ast \pi}
\converge_{i_0}$ and
${w\sigma_0 \ast \pi_0} \red_{i_0} {E\sigma[x := w\sigma] \ast \pi}
\diverge_{i_0}$.
\end{proof}

\begin{lemma}\label{aposs}
Let $s$ be a process, $t$ be a term, $a$ be a term variable and $k$ be a
natural number. If $s[a := t] \converge_k$ then there is a blocked state
$p$ such that $s \succ^{*} p$ and either
\begin{itemize}
\item $p = v \ast \alpha$ for some value $v$ and a stack variable $\alpha$,
\item $p = a \ast \pi$ for some stack $\pi$,
\item $k > 0$ and $p = \delta(v,w) \ast \pi$ for some values $v$ and $w$ and
      stack $\pi$, and in this case $v[a := t] \not\equiv_j w[a := t]$ for
      some $j < k$.
\end{itemize}
\end{lemma}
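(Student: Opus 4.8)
The plan is to argue by strong induction on the length $n$ of a shortest converging reduction $s[a:=t] \red_k^{n} q$ with $q$ final, splitting on whether $s$ is itself blocked. The guiding observation is that a $\succ$-step on $s$ lifts through the substitution $[a:=t]$ to a $\red_k$-step on $s[a:=t]$ by lemma \ref{redcompatall}, and that $\succ$ and $\red_k$ cannot bifurcate before a $\delta$-redex is exposed: a $\delta$-like process is $\succ$-blocked, so every $\succ$-redex is \emph{not} $\delta$-like and therefore has a unique $\red_k$-successor (namely its $\succ$-successor). This local determinism is what lets me transport information back and forth between the reduction of $s$ under $(\succ)$ and the reduction of $s[a:=t]$ under $(\red_k)$.

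In the inductive case $s$ is not blocked, so $s \succ s'$ for a unique $s'$, and lemma \ref{redcompatall} gives $s[a:=t] \succ s'[a:=t]$. Since $s[a:=t]$ is a $\succ$-redex it is not $\delta$-like, hence this $\succ$-step is its only $\red_k$-successor; as $q$ is blocked the converging reduction has at least one step and must begin with it, so $s'[a:=t] \red_k^{n-1} q$ and $s'[a:=t] \converge_k$. The induction hypothesis applied to $s'$ (with the same $t$, $a$, $k$) yields a blocked $p$ with $s' \succ^{*} p$ of one of the three announced shapes, and $s \succ s' \succ^{*} p$ closes the case. This is also where termination of $s$ under $(\succ)$ is taken care of: an infinite $\succ$-reduction of $s$ would lift to an infinite $\red_k$-reduction of $s[a:=t]$ (no $\delta$-redex could ever appear along it, being $\succ$-blocked), contradicting $s[a:=t] \converge_k$.

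It remains to treat the base case where $s$ is blocked, where I set $p := s$ and enumerate the shapes of $p$ using lemma \ref{possibilities}. If $p = v \ast \alpha$ or $p = a \ast \pi$ we are in the first two conclusions and there is nothing to prove. For every other shape I derive a contradiction from $s[a:=t] = p[a:=t] \converge_k$: when $p$ is stuck, lemma \ref{redstable} makes $p[a:=t]$ stuck; when $p$ is headed by a free $\lambda$-variable or by a term variable $b \neq a$, substituting $[a:=t]$ leaves the head untouched. In all these cases $p[a:=t]$ is again $\succ$-blocked, not $\delta$-like and not final, hence $\red_k$-blocked but not final, and therefore cannot converge. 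Finally, if $p = \delta_{v,w} \ast \pi$ is $\delta$-like, then $p[a:=t] = \delta_{v[a:=t],w[a:=t]} \ast \pi[a:=t]$ is not final yet converges, so it cannot be $\red_k$-blocked; the only available step is the $\delta$-rule, which yields some $j < k$ with $v[a:=t] \not\equiv_j w[a:=t]$ and forces $k > 0$, giving the third conclusion.

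The main obstacle, and the point demanding care, is precisely the coordination of the two reduction relations: one must check that a $\succ$-redex never coincides with a $\delta$-redex, so that the reductions of $s$ and of $s[a:=t]$ stay in lockstep until a $\delta_{v,w}$ surfaces, and that substitution by $[a:=t]$ neither creates nor destroys blockedness except exactly at a head occurrence of $a$ — the mechanism that isolates the second conclusion. Everything else is routine: the lifting of reductions is lemma \ref{redcompatall}, the stability of stuck states is lemma \ref{redstable}, and the classification of blocked processes is lemma \ref{possibilities}.
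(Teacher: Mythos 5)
Your proof is correct and follows essentially the same route as the paper's: reduce $s$ under $(\succ)$ to a blocked state, transport the convergence of $s[a:=t]$ along the way via lemma \ref{redcompatall}, and then classify that blocked state using lemmas \ref{possibilities} and \ref{redstable}, extracting the witness $j<k$ in the $\delta$-like case. The only difference is presentational: you organize the argument as an induction on the length of the converging $(\red_k)$-reduction and spell out the determinism/lockstep observation (a $\succ$-redex is never $\delta$-like), which the paper leaves implicit in its non-termination argument and in its claim that $p\sigma\converge_k$.
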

\begin{proof}
Let $\sigma$ be the substitution $[a := t]$. If $s$ is non-terminating, lemma
\ref{redstable} tells us that $s\sigma$ is also non-terminating, which
contradicts $s\sigma \converge_k$. Consequently, there is a blocked process
$p$ such that $s \succ^{*} p$ since $(\succ) \subseteq (\red_k)$. Using lemma
\ref{redcompatall} we get
$s\sigma \succ^{*} p\sigma$ from which we obtain $p\sigma \converge_{k}$.
The process $p$ cannot be stuck, otherwise $p\sigma$ would also be stuck
by lemma \ref{redstable}, which would contradict $p\sigma \converge_{k}$.
Let us now suppose that $p = \delta_{v,w} \ast \pi$ for some values $v$ and
$w$ and some stack $\pi$. Since
$\delta_{v\sigma,w\sigma} \ast \pi \converge_k$ there must be $i < k$ such
that $v\sigma \not\equiv_j w\sigma$, otherwise this would contradict
$\delta_{v\sigma,w\sigma} \ast \pi \converge_k$. In this case we necessarily
have $k > 0$, otherwise there would be no possible candidate for $i$.
According to lemma \ref{possibilities} we need to rule out four more forms of
therms: $x.l \ast \pi$, $x \ast v.\pi$, $case_x\;B \ast \pi$ and $b \ast \pi$
in the case where $b \not= a$. If $p$ was of one of these forms the
substitution $\sigma$ would not be able to unblock the reduction of $p$, which
would contradict again $p\sigma \converge_{k}$.
\end{proof}

\begin{lemma}\label{aextlem}
Let $t_1$, $t_2$ and $E$ be terms and $a$ be a term variable. For every
$k \in \bbN$, if $t_1 \equiv_k t_2$ then $E[a \!:=\! t_1] \equiv_k E[a
\!:=\! t_2]$.
\end{lemma}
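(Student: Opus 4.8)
The plan is to prove the statement by induction on $k$, reducing it to a convergence-transfer property about processes and using Lemma~\ref{aposs} as the main engine. Unfolding the definition of $\equiv_k$, it suffices to show that for every stack $\pi$, substitution $\sigma$, and every $j \le k$, the convergence $(E[a\!:=\!t_1])\sigma \ast \pi \converge_j$ implies $(E[a\!:=\!t_2])\sigma \ast \pi \converge_j$ (the converse and the symmetric direction follow by exchanging $t_1$ and $t_2$). After renaming the substituted variable $a$ so that it is fresh for $\sigma$, $\pi$, and for $t_1, t_2$ — which I may do since $a$ is merely the name of the hole — I can fold $\sigma$ into $E$ and into the $t_i$ (noting that $t_1 \equiv_k t_2$ entails $t_1\sigma \equiv_k t_2\sigma$), so the goal becomes: for the process $P = E\sigma \ast \pi$ with free variable $a$, if $P[a\!:=\!t_1] \converge_j$ then $P[a\!:=\!t_2]\converge_j$. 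I will prove this by a nested induction on the length of a converging $\red_j$-reduction of $P[a\!:=\!t_1]$.

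Given such a reduction, Lemma~\ref{aposs} (with $t = t_1$) provides a blocked state $p$ reached from $P$ by pure $\succ$-steps, of one of three shapes. If $p = v \ast \alpha$ is final, then $P \succ^{*} p$ is substitution-independent, so by Lemma~\ref{redcompatall} we also have $P[a\!:=\!t_2] \succ^{*} p[a\!:=\!t_2]$, which is still final, giving convergence at once. If $p = \delta_{v,w}\ast\pi'$ with $v[a\!:=\!t_1]\not\equiv_{j'}w[a\!:=\!t_1]$ for some $j' < j$, I use the outer induction hypothesis at level $j' < k$ to obtain $v[a\!:=\!t_1]\equiv_{j'}v[a\!:=\!t_2]$ and $w[a\!:=\!t_1]\equiv_{j'}w[a\!:=\!t_2]$; since $\equiv_{j'}$ is an equivalence relation, transitivity transports the inequivalence to $v[a\!:=\!t_2]\not\equiv_{j'}w[a\!:=\!t_2]$, so the $\delta$ fires under $t_2$ as well, and I conclude by the inner induction hypothesis applied to the strictly shorter reduction of the common reduct $(v\ast\pi')[a\!:=\!t_1]$.

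The crux is the third case, $p = a \ast \pi'$. Here substitution exposes $t_1$ in head position, $P[a\!:=\!t_1] \succ^{*} t_1 \ast \pi'[a\!:=\!t_1]$, and the hypothesis $t_1 \equiv_k t_2$, instantiated with this very stack, lets me swap the head to obtain $t_2 \ast \pi'[a\!:=\!t_1]\converge_j$. What remains is to replace the residual occurrences of $a$ still sitting inside the stack $\pi'$, i.e. to pass from $\pi'[a\!:=\!t_1]$ to $\pi'[a\!:=\!t_2]$ under the fixed head $t_2$. The main obstacle is precisely this entanglement: a copy of $a$ may appear simultaneously in the head and in the stack, so a single invocation of $t_1\equiv_k t_2$ cannot realign both at once, and the naive recursion does not obviously decrease. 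I plan to resolve it by first isolating the sub-claim for processes whose head is \emph{not} the variable $a$: for such processes every occurrence of $a$ that surfaces to the head does so only after at least one reduction step, so exposing it strictly shortens the reduction and the inner induction applies cleanly. The general head-$a$ case is then discharged by a single head-swap via $t_1 \equiv_k t_2$ down to such a head-$\neq a$ process, on which this sub-claim (proved by the same nested induction) applies. Freshness of $a$ with respect to $t_1$ and $t_2$ is what guarantees that exposing the head really makes progress rather than reintroducing the hole.
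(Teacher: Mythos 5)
Your overall architecture coincides with the paper's: an outer induction on $k$ (needed only to transport $\delta$-inequivalences, which you do correctly via the outer hypothesis at level $j'<k$ plus transitivity of $\equiv_{j'}$), Lemma~\ref{aposs} to cut the reduction of $E\sigma\ast\pi$ into checkpoints where either $a$ or a $\delta$-instruction surfaces to the head, and an induction along those checkpoints. The final and $\delta$ cases are fine. The gap is in the case you yourself single out as the crux. When the head becomes $a$ you propose to perform the head-swap first, obtaining $t_2\ast\pi'[a\!:=\!t_1]\converge_j$, and only then to appeal to the head-$\neq a$ sub-claim applied to $t_2\ast\pi'$. But your inner induction is on the length of the converging reduction of the $[a\!:=\!t_1]$-instance, and the converging reduction of $(t_2\ast\pi')[a\!:=\!t_1]=t_2\ast\pi'[a\!:=\!t_1]$ produced by invoking $t_1\equiv_j t_2$ bears no relation to the length of the original one: $t_2$ may take arbitrarily many more steps than $t_1$. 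So that recursive call is not covered by your measure, the induction as written is not well-founded, and no occurrence-counting measure can rescue it since reduction duplicates occurrences of $a$.

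The repair is to perform the two rewritings in the opposite order. From the checkpoint $a\ast\pi'$, reached in $m\geq 1$ steps by your freshness assumption, you have $t_1\ast\pi'[a\!:=\!t_1]\converge_j$ in strictly fewer steps; the process $t_1\ast\pi'$ has head $\neq a$, so the inner induction hypothesis applies to \emph{it} and yields $t_1\ast\pi'[a\!:=\!t_2]\converge_j$; only then do you invoke $t_1\equiv_j t_2$ with the already-converted stack $\pi'[a\!:=\!t_2]$ to conclude $t_2\ast\pi'[a\!:=\!t_2]\converge_j$. This ``stack first, head last'' ordering is exactly what the paper secures by running its induction backwards along the finite checkpoint sequence $(E_i,\pi_i,l_i)$: at each checkpoint everything downstream has already been moved into the $[a\!:=\!t_2]$ world before the head is exchanged. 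With that reordering your proof goes through and is essentially the paper's argument recast as a forward induction on reduction length.
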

\begin{proof}
Let us take $k \in \bbN$, suppose that $t_1 \equiv_k t_2$ and show that
$E[a \!:=\! t_1] \equiv_k E[a \!:=\! t_1]$. By symmetry we can assume that
we have $i \leq k$, $\pi \in \Pi$ and a substitution $\sigma$ such that
$(E[a \!:=\! t_1])\sigma \ast \pi \converge_i$ and show that
$(E[a \!:=\! t_2])\sigma \ast \pi \converge_i$. As we are free to rename
$a$, we can suppose that it does not appear in $dom(\sigma)$, $TV(\pi)$,
$TV(t_1)$ or $TV(t_2)$. In order to lighten the notations we define
$E' = E\sigma$, $\sigma_1 = [a \!:=\! t_1\sigma]$ and
$\sigma_2 = [a \!:=\! t_2\sigma]$. We are hence assuming
$E'\sigma_1 \ast \pi \converge_i$ and trying to show
$E'\sigma_2 \ast \pi \converge_i$.

We will now build a sequence $(E_i,\pi_i,l_i)_{i \in I}$ in such a way that
$E'\sigma_1 \ast \pi \reds_k E_i\sigma_1 \ast \pi_i\sigma_1$ in $l_i$ steps
for every $i \in I$. Furthermore, we require that $(l_i)_{i \in I}$ is
increasing and that it has a strictly increasing subsequence. Under this
condition our sequence will necessarily be finite. If it was infinite the
number of reduction steps that could be taken from the state
$E'\sigma_1 \ast \pi$ would not be bounded, which would contradict
$E'\sigma_1 \ast \pi \converge_i$. We now denote our finite sequence
$(E_i,\pi_i,l_i)_{i \leq n}$ with $n \in \bbN$. In order to show that
$(l_i)_{i \leq n}$ has a strictly increasing subsequence, we will ensure that
it does not have three equal consecutive values. More formally, we will
require that if $0 < i < n$ and $l_{i-1} = l_i$ then $l_{i+1} > l_i$.

To define $(E_0,\pi_0,l_0)$ we consider the reduction of $E' \ast \pi$. Since
we know that $(E' \ast \pi)\sigma_1 = E'\sigma_1 \ast \pi \converge_i$ we
use lemma \ref{aposs} to obtain a blocked state $p$ such that
${E' \ast \pi} \succ^j p$.
We can now take $E_0 \ast \pi_0 = p$ and $l_0 = j$.
By lemma \ref{redcompatall} we have
$(E' \ast \pi)\sigma_1 \succ^j {E_0\sigma_1 \ast \pi_0\sigma_1}$ from which we
can deduce that
$(E' \ast \pi)\sigma_1 \reds_k {E_0\sigma_1 \ast \pi_0\sigma_1}$ in $l_0 = j$
steps.

To define $(E_{i+1},\pi_{i+1},l_{i+1})$ we consider the reduction of the
process $E_i\sigma_1 \ast \pi_i$. By construction we know that
${E'\sigma_1 \ast \pi}
\reds_k {E_i\sigma_1 \ast \pi_i\sigma_1 = (E_i\sigma_1 \ast \pi_i)\sigma_1}$
in $l_i$ steps. Using lemma \ref{aposs} we know that $E_i \ast \pi_i$ might be
of three shapes.
\begin{itemize}
\item If ${E_i \ast \pi_i} = {v \ast \alpha}$ for some value $v$ and stack
      variable $\alpha$ then the end of the sequence was reached with $n = i$.
\item If $E_i = a$ then we consider the reduction of $E_i\sigma_1 \ast \pi_i$.
      Since $(E_i\sigma_1 \ast \pi_i)\sigma_1 \converge_k$ we know from
      lemma \ref{aposs} that there is a blocked process $p$ such that
      ${E_i\sigma_1 \ast \pi_i} \succ^j p$. Using lemma \ref{redcompatall} we
      obtain that ${E_i\sigma_1 \ast \pi_i\sigma_1} \succ^j p\sigma_1$ from which we
      can deduce that ${E_i\sigma_1 \ast \pi_i\sigma_1} \red_k p\sigma_1$ in
      $j$ steps. We then take $E_{i+1} \ast \pi_{i+1} = p$ and
      $l_{i+1} = l_i + j$.

      Is it possible to have $j=0$? This can only happen when
      $E_i\sigma_1 \ast \pi_i$ is of one of the three forms of lemma
      \ref{aposs}. It cannot be of the form $a \ast \pi$ as we assumed that
      $a$ does not appear in $t_1$ or $\sigma$. If it is of the form $v \ast
      \alpha$, then we reached the end of the sequence with $i + 1 = n$ so
      there is no trouble. The process $E_i\sigma_1 \ast \pi_i$ may be of the
      form $\delta(v,w) \ast \pi$, but we will have $l_{i+2} > l_{i+1}$.
\item If $E_i = \delta(v,w)$ for some values $v$ and $w$ we have
      $m < k$ such that $v\sigma_1 \not\equiv_m w\sigma_1$. Hence
      ${E_i\sigma_1 \ast \pi_i = \delta(v\sigma_1,w\sigma_1) \ast \pi_i}
      \red_k {v\sigma_1 \ast \pi_i}$ by definition. Moreover
      ${E_i\sigma_1 \ast \pi_i\sigma_1} \red_k {v\sigma_1 \ast \pi_i\sigma_1}$
      by lemma \ref{redcompatall}. Since
      ${E'\sigma_1 \ast \pi} \reds_k {E_i\sigma_1 \ast \pi_i\sigma_1}$ in
      $l_i$ steps we obtain that
      ${E'\sigma_1 \ast \pi} \reds_k {v\sigma_1 \ast \pi_i\sigma_1}$ in
      $l_i + 1$ steps. This also gives us ${(v\sigma_1 \ast \pi_i)\sigma_1 = 
      v\sigma_1 \ast \pi_i\sigma_1} \converge_k$.
      
      We now consider the reduction of the process $v\sigma_1 \ast \pi_i$. By
      lemma \ref{aposs} there is a blocked process $p$ such that
      ${v\sigma_1 \ast \pi_i} \succ^j p$. Using lemma \ref{redcompatall} we
      obtain ${v\sigma_1 \ast \pi_i\sigma_1} \succ^j p\sigma_1$ from which we
      deduce that ${v\sigma_1 \ast \pi_i\sigma_1} \reds_k p\sigma_1$ in $j$
      steps. We then take $E_{i+1} \ast \pi_{i+1} = p$ and
      $l_{i+1} = l_i + j + 1$. Note that in this case we have $l_{i+1} > l_i$.
\end{itemize}
Intuitively $(E_i,\pi_i,l_i)_{i \leq n}$ mimics the reduction of
$E'\sigma_1 \ast \pi$ while making explicit every substitution of $a$ and
every reduction of a $\delta$-like state.

To end the proof we show that for every $i \leq n$ we have
${E_i\sigma_2 \ast \pi_i\sigma_2} \converge_k$. For $i = 0$ this will give
us ${E'\sigma_2 \ast \pi} \converge_k$ which is the expected result. Since
$E_n \ast \pi_n = v \ast \alpha$ we have $E_n\sigma_2 \ast \pi_n\sigma_2 =
v\sigma_2 \ast \alpha$ from which we trivially obtain
${E_n\sigma_2 \ast \pi_n\sigma_2} \converge_k$.
We now suppose that ${E_{i+1}\sigma_2 \ast \pi_i\sigma_2} \converge_k$ for
$0 \leq i < n$ and show that ${E_i\sigma_2 \ast \pi_i\sigma_2} \converge_k$.
By construction $E_i \ast \pi_i$ can be of two shapes\footnote{Only
$E_n \ast \pi_n$ can be of the form $v \ast \alpha$.}:
\begin{itemize}
\item If $E_i=a$ then ${t_1\sigma \ast \pi_i} \reds_k {E_{i+1} \ast \pi_{i+1}}$.
  Using lemma \ref{redcompatall} we obtain
  $t_1\sigma \ast \pi_i\sigma_2 \red_k E_{i+1}\sigma_2 \ast
  \pi_i\sigma_2$ from which we deduce $t_1\sigma \ast \pi_i\sigma_2
  \converge_k$ by induction hypothesis. Since $t_1 \equiv_k t_2$ we obtain
  ${t_2\sigma \ast \pi_i\sigma_2 = (E_i \ast \pi_i)\sigma_2} \converge_k$.
\item If $E_i = \delta(v,w)$ then ${v \ast \pi_i} \red_k {E_{i+1} \ast
  \pi_{i+1}}$ and hence $v\sigma_2 \ast \pi_i\sigma_2 \red_k
  E_{i+1}\sigma_2 \ast \pi_{i+1}\sigma_2$ by lemma \ref{redcompatall}. Using the
  induction hypothesis we obtain ${v\sigma_2 \ast \pi_i\sigma_2} \converge_k$.
  It remains to show that ${\delta(v\sigma_2,w\sigma_2) \ast \pi_i\sigma_2}
  \reds_k {v\sigma_2 \ast \pi_i\sigma_2}$.
  We need to find $j < k$ such that $v\sigma_2 \not\equiv_j w\sigma_2$.
  By construction there is $m < k$ such that
  $v\sigma_1 \not\equiv_m w\sigma_1$. We are going to show that
  $v\sigma_2 \not\equiv_m w\sigma_2$. By using the global induction
  hypothesis twice we obtain $v\sigma_1 \equiv_m v\sigma_2$ and
  $w\sigma_1 \equiv_m v\sigma_2$. Now if $v\sigma_2 \equiv_m w\sigma_2$ then
  $v\sigma_1 \equiv_m v\sigma_2 \equiv_m w\sigma_2 \equiv_m w\sigma_1$
  contradicts $v\sigma_1 \not\equiv w\sigma_1$. Hence we must have
  $v\sigma_2 \not\equiv_m w\sigma_2$.
\end{itemize}
\end{proof}

\begin{theorem}\label{extterm}
Let $t_1$, $t_2$ and $E$ be three terms and $a$ be a term variable. If
$t_1 \equiv t_2$ then $E[a \!:=\! t_1] \equiv E[a \!:=\! t_2]$.
\end{theorem}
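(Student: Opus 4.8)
The plan is to observe that this theorem is an immediate consequence of lemma \ref{aextlem} together with the characterisation $(\equiv) = \bigcap_{i \in \bbN} (\equiv_i)$ established earlier. All the genuine work — the delicate bookkeeping of substitutions of $a$ and reductions of $\delta$-like states — has already been carried out in the indexed version, so here we merely need to thread the quantifier over $i$ through that lemma.

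Concretely, I would start by unfolding the goal: proving $E[a \!:=\! t_1] \equiv E[a \!:=\! t_2]$ amounts, by definition of $(\equiv)$ as the intersection $\bigcap_{i\in\bbN}(\equiv_i)$, to showing that $E[a \!:=\! t_1] \equiv_k E[a \!:=\! t_2]$ holds for every $k \in \bbN$. So I would fix an arbitrary $k \in \bbN$ and aim to produce that single indexed equivalence.

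Next I would exploit the hypothesis $t_1 \equiv t_2$. Again by the intersection characterisation of $(\equiv)$, this hypothesis gives $t_1 \equiv_k t_2$ for the fixed $k$ (indeed for all indices). With $t_1 \equiv_k t_2$ in hand, lemma \ref{aextlem} applies directly and yields $E[a \!:=\! t_1] \equiv_k E[a \!:=\! t_2]$. Since $k$ was arbitrary, this establishes the membership in every $(\equiv_k)$, hence in their intersection, which is exactly $E[a \!:=\! t_1] \equiv E[a \!:=\! t_2]$.

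I do not anticipate any real obstacle: the statement is essentially the limit form of lemma \ref{aextlem}, and the only thing to check is that passing to the intersection is legitimate in both directions — extracting $t_1 \equiv_k t_2$ from $t_1 \equiv t_2$, and reassembling the family $(E[a \!:=\! t_1] \equiv_k E[a \!:=\! t_2])_{k}$ into $E[a \!:=\! t_1] \equiv E[a \!:=\! t_2]$ — which is immediate from the definitions. The proof is therefore a short two-line argument rather than a substantial construction.
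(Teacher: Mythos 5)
Your argument is correct and is essentially identical to the paper's own proof: both unfold $(\equiv)$ as $\bigcap_{i\in\bbN}(\equiv_i)$, fix an arbitrary index, extract the corresponding indexed hypothesis $t_1 \equiv_k t_2$, and conclude by lemma \ref{aextlem}. Nothing further is needed.
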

\begin{proof}
We suppose that $t_1 \equiv t_2$ which means that $t_1 \equiv_i t_2$ for
every $i \in \bbN$. We need to show that
$E[a \!:=\! t_1] \equiv E[a \!:=\! t_2]$ so we take $i_0 \in \bbN$ and show
$E[a \!:=\! t_1] \equiv_{i_0} E[a \!:=\! t_2]$. By hypothesis we have
$t_1 \equiv_{i_0} t_2$ and hence we can conclude using lemma \ref{aextlem}.
\end{proof}

\section{Formulas and Semantics}\label{sect-semantics}
The syntax presented in the previous section is part of a realizability
machinery that will be built upon here. We aim at obtaining a semantical
interpretation of the second-order type system that will be defined shortly.
Our abstract machine slightly differs from the mainstream presentation of
\emph{Krivine's classical realizability} which is usually call-by-name.
Although call-by-value presentations have rarely been published, such
developments are well-known among classical realizability experts. The
addition of the $\delta$ instruction and the related modifications are
however due to the author.

\subsection{Pole and orthogonality}

As always in classical realizability, the model is parametrized by a pole,
which serves as an exchange point between the world of programs and the
world of execution contexts (i.e. stacks).
\begin{definition}
A \emph{pole} is a set of processes $\dbot \subseteq \Lambda\times\Pi$ which
is \emph{saturated} (i.e. closed under backward reduction). More formally,
if we have $q \in \dbot$ and $p \red q$ then $p \in \dbot$.
\end{definition}
Here, for the sake of simplicity and brevity, we are only going to use the
pole
$$ \dbot = \{p \in \Lambda\times\Pi \;|\; p \converge_\red \} $$
which is clearly saturated. Note that this particular pole is also closed
under the reduction relation $(\red)$, even though this is not a general
property. In particular $\dbot$ contains all final processes.

The notion of \emph{orthogonality} is central in Krivine's classical
realizability. In this framework a type is interpreted (or realized) by
programs computing corresponding values. This interpretation is spread in a
three-layered
construction, even though it is fully determined by the first layer (and the
choice of the pole). The first layer consists of a set of values that we will
call the \emph{raw semantics}. It gathers all the syntactic values that should
be considered as having the corresponding type. As an example, if we were to
consider the type of natural numbers, its raw semantics would be the set
$\{\bar{n} \;|\; n \in \bbN\}$ where $\bar{n}$ is some encoding of $n$. The
second layer, called \emph{falsity value} is a set containing every stack that
is a candidate for building a valid process using any value from the raw
semantics. The notion of validity depends on the choice of the pole. Here for
instance, a valid process is a normalizing one (i.e. one that reduces to a
final state). The third layer, called \emph{truth value} is a set of terms
that is built by iterating the process once more. The formalism for the two
levels of orthogonality is given in the following definition.
\begin{definition}\label{orthodef}
For every set $\phi \subseteq \Lambda_v$ we define a set
$\phi^\bot \subseteq \Pi$ and a set $\phi^{\bot\bot} \subseteq \Lambda$ as
follows.
\begin{align*}
  \phi^\bot &=
    \{\pi \in \Pi \;|\; \forall v \in \phi, v \ast \pi \in \dbot\}\\
  \phi^{\bot\bot} &=
    \{t \in \Lambda \;|\; \forall \pi \in \phi^\bot, t \ast \pi \in \dbot\}
\end{align*}
\end{definition}

We now give two general properties of orthogonality that are true in every
classical realizability model. They will be useful when proving the soundness
of our type system.
\begin{lemma}\label{orthosimple}
  If $\phi \subseteq \Lambda_v$ is a set of values, then $\phi \subseteq
  \phi^{\bot\bot}$.
\end{lemma}
\begin{proof}
  Immediate following the definition of $\phi^{\bot\bot}$.
\end{proof}
\begin{lemma}\label{orthoblabla}
  Let $\phi \subseteq \Lambda_v$ and $\psi \subseteq \Lambda_v$ be two
  sets of values. If $\phi \subseteq \psi$ then
  $\phi^{\bot\bot} \subseteq \psi^{\bot\bot}$.
\end{lemma}
\begin{proof}
  Immediate by definition of orthogonality.
\end{proof}
\medskip

The construction involving the terms of the form $\delta_{v,x}$ and
$(\equiv)$ in the previous section is now going to gain meaning. The
following theorem, which is our central result, does not hold in every
classical realizability model. Obtaining a proof required us to internalize
observational equivalence, which introduces a non-computable reduction
rule.
\begin{theorem}\label{biortho}
If $\Phi \subseteq \Lambda_v$ is a set of values closed under $(\equiv)$, then
$\Phi^{\bot\bot} \cap \Lambda_v = \Phi$.
\end{theorem}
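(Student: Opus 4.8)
The plan is to prove the two inclusions $\Phi \subseteq \Phi^{\bot\bot} \cap \Lambda_v$ and $\Phi^{\bot\bot} \cap \Lambda_v \subseteq \Phi$ separately. The first inclusion is the easy half: since $\Phi \subseteq \Lambda_v$, lemma \ref{orthosimple} gives $\Phi \subseteq \Phi^{\bot\bot}$, and as every element of $\Phi$ is a value we immediately get $\Phi \subseteq \Phi^{\bot\bot} \cap \Lambda_v$. No use of $\delta$ or of closure under $(\equiv)$ is needed here.

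The real content is the reverse inclusion, which is exactly where the non-computable instruction $\delta_{v,w}$ earns its keep. I would take an arbitrary value $w \in \Phi^{\bot\bot} \cap \Lambda_v$ and argue by contradiction, assuming $w \notin \Phi$. The idea is to build a stack that separates $w$ from every value of $\Phi$, thereby witnessing $w \notin \Phi^{\bot\bot}$. The natural candidate is a stack of the form $[\lambda x\,\delta_{x,w}]\,\pi_0$ (or a stack frame built around $\delta$), whose effect is to launch, for whatever value $v$ it is fed, the process $\delta_{v,w} \ast \pi_0$. The crucial point is the reduction rule for $\delta$: the process $\delta_{v,w} \ast \pi_0$ reduces (converges, landing in $\dbot$) precisely when $v \not\equiv w$, and is stuck (hence can be arranged to diverge, staying out of $\dbot$) when $v \equiv w$. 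So this stack tests each incoming value for observational equivalence to $w$.

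Concretely, I would first show that this separating stack lies in $\Phi^{\bot}$. For any $v \in \Phi$, since $\Phi$ is closed under $(\equiv)$ and $w \notin \Phi$, we must have $v \not\equiv w$; hence $\delta_{v,w} \ast \pi_0$ converges and the process feeding $v$ to the stack is in $\dbot$, so the stack is orthogonal to every value of $\Phi$. Then, because $w \in \Phi^{\bot\bot}$, the process $w \ast \big([\lambda x\,\delta_{x,w}]\,\pi_0\big)$ must itself be in $\dbot$; but this process reduces to $\delta_{w,w} \ast \pi_0$, and since $w \equiv w$ by reflexivity this is a stuck, $\delta$-like state that does not converge — contradicting membership in $\dbot$. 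The contradiction forces $w \in \Phi$.

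The main obstacle I expect is twofold. First, one must check carefully that $\delta_{w,w} \ast \pi_0$ genuinely fails to converge: this requires choosing $\pi_0$ so that the only way the process could reach a final state is through firing the $\delta$-rule, which reflexivity of $(\equiv)$ forbids — a clean choice is a fresh stack variable $\pi_0 = \alpha$, so that $\delta_{w,w} \ast \alpha$ is blocked and $\delta$-like, hence outside $\dbot$. Second, the definitions of $(\red)$ and $(\equiv)$ are mutually recursive through the stratification by $i$, so I would want to be sure the argument respects this — but since the separating stack uses only $v \not\equiv w$ (the full relation) and the pole is defined via $\converge_{\red}$, corollary \ref{eqconvconv} and the direct (non-stratified) characterisation of $(\red)$ and $(\equiv)$ let me reason at the level of $(\red)$ throughout, sidestepping the indices. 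The closure of $\Phi$ under $(\equiv)$ is used in exactly one place, to pass from $w \notin \Phi$ to $v \not\equiv w$ for all $v \in \Phi$, and this is what makes the construction go through.
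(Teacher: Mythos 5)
Your proof is correct and follows essentially the same route as the paper: both argue the nontrivial inclusion by assuming $w \notin \Phi$ and separating $w$ from $\Phi$ with the stack $[\lambda x\,\delta_{x,w}]\,\alpha$, using closure under $(\equiv)$ exactly where you say, and reflexivity of $(\equiv)$ to make $\delta_{w,w} \ast \alpha$ stuck. The only cosmetic difference is that the paper phrases the argument as a contrapositive rather than a contradiction.
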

\begin{proof}
The direction $\Phi \subseteq \Phi^{\bot\bot} \cap \Lambda_v$ is
straight-forward using lemma \ref{orthosimple}. We are going to
show that $\Phi^{\bot\bot} \cap \Lambda_v \subseteq \Phi$, which amounts to
showing that for every value $v \in \Phi^{\bot\bot}$ we have $v \in \Phi$.
We are going to show the contrapositive, so let us assume $v \not\in \Phi$ and
show $v \not\in \Phi^{\bot\bot}$. We need to find a stack $\pi_0$ such that
$v \ast \pi_0 \not\in \dbot$ and for every value $w \in \Phi$,
$w \ast \pi_0 \in \dbot$. We take $\pi_0 = [\lambda x\;\delta_{x,v}]\;\alpha$
and show that is is suitable. By definition of the reduction relation
$v \ast \pi_0$ reduces to $\delta_{v,v} \ast \alpha$ which is not in $\dbot$
(it is stuck as $v \equiv v$ by reflexivity). Let us now take $w \in \Phi$.
Again by definition, $w \ast \pi_0$ reduces to $\delta_{w,v} \ast \alpha$,
but this time we have $w \not\equiv v$ since $\Phi$ was supposed to be closed
under $(\equiv)$ and $v \not\in \Phi$. Hence $w \ast \pi_0$ reduces to
${w \ast \alpha} \in \dbot$.
\end{proof}
\medskip

It is important to check that the pole we chose does not yield a degenerate
model. In particular we check that no term is able to face every stacks. If
it were the case, such a term could be use as a proof of $\bot$.
\begin{theorem}\label{poleconsist}
The pole $\dbot$ is consistent, which means that for every closed term $t$
there is a stack $\pi$ such that $t \ast \pi \not\in \dbot$.
\end{theorem}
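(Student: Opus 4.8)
The plan is to exhibit, uniformly in $t$, a single \emph{trapping} stack $\pi_0$ that can never lead to a final process, whatever closed term it faces. Recall that a process lies in $\dbot$ exactly when it converges to a final state $v \ast \alpha$ (a value facing a stack variable), so it suffices to find $\pi_0$ with ${t \ast \pi_0}\diverge_\red$, i.e. such that no reduct of $t \ast \pi_0$ is final. I would take $\pi_0 = [C[\{\}]]\,\alpha$, where $C$ is any constructor and $\{\}$ is the empty record, so that $C[\{\}]$ is a closed value and $[C[\{\}]]$ is a ``guard'' frame sitting on top of the stack variable $\alpha$. The intuition is that $\alpha$ is the only stack variable that can ever occur (since $t$ is closed and reduction never creates fresh free stack variables), and that it is permanently shielded by the guard frame: the only way to reach $\alpha$ is to feed a value to the guard, which by rule $v \ast [t']\pi \succ t' \ast v.\pi$ produces the process $C[\{\}] \ast v.\alpha$, which is blocked and non-final by lemma \ref{remark}.

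To turn this intuition into a proof I would set up an invariant $(I)$ on processes asserting that (I1) the sole free stack variable occurring anywhere is $\alpha$, and (I2) every free occurrence of $\alpha$ sits as the tail of a substack of the form $[C[\{\}]]\,\alpha$, i.e. directly under a guard frame. The starting process $t \ast \pi_0$ satisfies $(I)$ because $t$ is closed and $\pi_0$ is a single guarded $\alpha$. The core of the argument is a case analysis on the reduction rules showing that $(I)$ is preserved by $(\succ)$ (and by the $\delta$-rule), with exactly one exception: the rule $v \ast [t']\pi \succ t' \ast v.\pi$ applied to the guard of $\pi_0$, which yields the stuck process $C[\{\}] \ast v.\alpha$. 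Every other rule either leaves the stack structure around $\alpha$ untouched, or---crucially for the $\mu$-rule $\mu\beta\,s \ast \pi \succ s[\beta := \pi]\ast \pi$---only substitutes the current stack $\pi$, whose $\alpha$-occurrences are already guarded by hypothesis, so that no unguarded $\alpha$ is ever introduced and (I1) is maintained since $\beta$ is bound.

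The two payoffs of $(I)$ finish the proof. First, no process satisfying $(I)$ can be final: a final process is $v \ast \gamma$ with $\gamma$ a stack variable, which by (I1) forces $\gamma = \alpha$, contradicting (I2) since a bare $\alpha$ is not guarded. Second, the single escape state $C[\{\}] \ast v.\alpha$ is itself blocked and non-final by lemma \ref{remark}. Hence, along the (deterministic) reduction sequence issued from $t \ast \pi_0$, every process either satisfies $(I)$ or is this terminal stuck state, and in both cases it is non-final; therefore $t \ast \pi_0 \diverge_\red$ and $t \ast \pi_0 \notin \dbot$.

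I expect the delicate point to be the preservation of $(I)$ under the rules that manipulate the stack, namely rule $2$, rule $3$ and the $\mu$-rule. There one must check, using (I2) on the redex, that a push $v.\alpha$ or a non-guard frame $[s]\alpha$ on a bare $\alpha$ can never arise: under $(I)$ any frame directly above $\alpha$ must be the guard $[C[\{\}]]$, and the current stack is never a bare $\alpha$, so the $\mu$-rule only ever duplicates guarded stacks. This confirms that the guard frame is the unique gateway to $\alpha$ and confines every possible $\alpha$-exposure to the dead stuck state above, which is precisely what rules out convergence.
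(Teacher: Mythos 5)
Your proof is correct, but it takes a genuinely different route from the paper's. The paper argues by cases on whether $t \ast \alpha$ converges: if not, $\pi = \alpha$ already works; if so, then $t \ast \alpha \reds v \ast \alpha$, and since $t$ is closed $\alpha$ is the only free stack variable, so substituting $\pi = [\lambda x\;\{\}]\{\}.\beta$ for $\alpha$ (via an extension of lemma \ref{redcompatall} to $(\red)$) transports the reduction to $t \ast \pi \reds v\sigma \ast \pi$, which then reduces in two more steps to the stuck, non-final process $\{\} \ast \{\}.\beta$. You instead exhibit a single trapping stack $[C[\{\}]]\,\alpha$, uniform in $t$, and prove divergence directly by an invariant (the only stack variable is $\alpha$, and every occurrence of it is guarded) preserved along the reduction, the unique escape being the stuck state $C[\{\}] \ast v.\alpha$ of lemma \ref{remark}. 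The paper's argument buys brevity: no invariant is needed, only closure of reduction under stack substitution --- though that lemma must then be checked for the $\delta$-rule as well, i.e.\ that $v \not\equiv w$ survives substituting a stack for $\alpha$, a point the paper treats as trivial. Your argument buys self-containedness: it relies only on inspection of the reduction rules and lemma \ref{remark}, at the price of the rule-by-rule preservation check, whose genuinely delicate case is, as you identify, the $\mu$-rule (the duplicated stack is guarded by hypothesis and $\beta$ is bound, so no unguarded $\alpha$ can appear). One small remark: you do not actually need determinism of $(\red)$ --- since every process reachable from $t \ast \pi_0$ either satisfies the invariant or is the stuck state, none is final, so $t \ast \pi_0 \diverge_\red$ regardless of the reduction strategy.
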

\begin{proof}
Let $t$ be a closed term and $\alpha$ be a stack constant. If we do not have
$t \ast \alpha \converge_\red$ then we can directly take $\pi = \alpha$.
Otherwise we know that $t \ast \alpha \reds v \ast \alpha$ for some value
$v$. Since $t$ is closed $\alpha$ is the only available stack variable. We now
show that $\pi = [\lambda x\;\{\}]\{\}.\beta$ is suitable. We denote $\sigma$
the substitution $[\alpha := \pi]$. Using a trivial extension of lemma
\ref{redcompatall} to the $(\red)$ relation we obtain $t \ast \pi = (t \ast
\alpha)\sigma \reds (v \ast \alpha)\sigma = v\sigma \ast \pi$. We hence have
$t \ast \pi \reds v\sigma \ast [\lambda x\;\{\}]\{\}.\beta \red^2 \{\} \ast
\{\}.\beta \not\in \dbot$.
\end{proof}
\medskip

\subsection{Formulas and their semantics}

In this paper we limit ourselves to second-order logic, even though the system
can easily be extended to higher-order. For every natural number $n$ we
require a countable set ${\mathcal{V}}_n = \{{X}_n, {Y}_n, {Z}_n ...\}$ of
$n$-ary predicate variables.
\begin{definition}
The syntax of formulas is given by the following grammar.
\begin{align*} 
A,B
  \bnfeq &{X}_n(t_1, ..., t_n)
  \bnfor A \Rightarrow B
  \bnfor \forall a\; A
  \bnfor \exists a\; A
  \bnfor \forall X_n\; A
  \bnfor \exists X_n\; A
\\
  \bnfor &\{l_i : A_i\}_{i \in I}
  \bnfor [C_i : A_i]_{i \in I}
  \bnfor t \in A
  \bnfor A \restriction t \equiv u
\end{align*}
\end{definition}
Terms appear in several places in formulas, in particular, they form the
individuals of the logic. They can be quantified over and are used as
arguments for predicate variables. Besides the ML-like formers for sums and
products (i.e. records and variants) we add a membership predicate and a
restriction
operation. The membership predicate $t \in A$ is used to express the fact that
the term $t$ has type $A$. It provides a way to encode the dependent product
type using universal quantification and the arrow type. In this sense, it is
inspired and related to Krivine's relativization of quantifiers.
$$ \Pi_{a:A}\;B \quad:=\quad \forall a (a \in A \Rightarrow B) $$
The restriction operator can be thought of as a kind of conjunction with no
algorithmic content. The formula $A \restriction t \equiv u$ is to be
interpreted in the same way as $A$ if the equivalence $t \equiv u$ holds, and
as $\bot$ otherwise\footnote{We use the standard second-order encoding:
$\bot = \forall X_0\; X_0$ and $\top = \exists X_0\; X_0$.}. In particular,
we will define the following types:
$$
A \restriction t \not\equiv u := A \restriction t \equiv u \Rightarrow \bot
\quad\quad
t \equiv u := \top \restriction t \equiv u
\quad\quad
t \not\equiv u := \top \restriction t \not\equiv u
$$

To handle free variables in formulas we will need to generalize the
notion of substitution to allow the substitution of predicate variables.
\begin{definition}
A substitution is a finite map $\sigma$ ranging over $\lambda$-variables,
$\mu$-variables, term and predicate variables such that:
\begin{itemize}
  \item if $x \in dom(\sigma)$ then $\sigma(x) \in \Lambda_v$,
  \item if $\alpha \in dom(\sigma)$ then $\sigma(\alpha) \in \Pi$,
  \item if $a \in dom(\sigma)$ then $\sigma(a) \in \Lambda$,
  \item if $X_n \in dom(\sigma)$ then
    $\sigma(X_n) \in {\Lambda^n \to \mathcal{P}({{\Lambda}_v}/\!\!\equiv)}$.
\end{itemize}
\end{definition}
\begin{remark}
A predicate variable of arity $n$ will be substituted by a $n$-ary predicate.
Semantically, such predicate will correspond to some total (set-theoretic)
function building a subset of $\Lambda_v/\!\!\equiv$ from $n$ terms. In the
syntax, the binding of the arguments of a predicate variables will happen
implicitly during its substitution.
\end{remark}
\begin{definition}
Given a formula $A$ we denote $FV(A)$ the set of its free variables. Given a
substitution $\sigma$ such that $FV(A) \subseteq dom(\sigma)$ we write
$A[\sigma]$ the closed formula built by applying $\sigma$ to $A$.
\end{definition}

In the semantics we will interpret closed formulas by sets of values closed
under the equivalence relation $(\equiv)$.
\begin{definition}
Given a formula $A$ and a substitution $\sigma$ such that $A[\sigma]$ is
closed, we define the \emph{raw semantics}
$\llbracket A \rrbracket_\sigma \subseteq \Lambda_v/\!\!\equiv$ of $A$ under
the substitution $\sigma$ as follows.
\begin{align*}
\llbracket X_n(t_1, ..., t_n) \rrbracket_\sigma =&\;
  \sigma(X_n)(t_1\sigma, ..., t_n\sigma)\\
\llbracket A \Rightarrow B \rrbracket_\sigma =&\;
  \{\lambda x\; t \;\;|\;\; \forall v \in \llbracket A \rrbracket_\sigma,
  t[x := v] \in \llbracket B \rrbracket_\sigma^{\bot\bot} \} \\
\llbracket \forall a\; A \rrbracket_\sigma =&\;
  \cap_{t \in \Lambda^\ast}{\llbracket A \rrbracket_{\sigma[a := t]}}\\
\llbracket \exists a\; A \rrbracket_\sigma =&\;
  \cup_{t \in \Lambda^\ast}{\llbracket A \rrbracket_{\sigma[a := t]}}\\
\llbracket \forall X_n\; A \rrbracket_\sigma =&\;
  \cap_{P \in \Lambda^n \to \mathcal{P}(\Lambda_v / \equiv)}
  {\llbracket A \rrbracket_{\sigma[X_n := P]}}\\
\llbracket \exists X_n\; A \rrbracket_\sigma =&\;
  \cup_{P \in \Lambda^n \to \mathcal{P}(\Lambda_v / \equiv)}
  {\llbracket A \rrbracket_{\sigma[X_n := P]}}\\
\llbracket \{l_i : A_i\}_{i \in I} \rrbracket_\sigma =&\;
  \{\{l_i = v_i\}_{i \in I} \;\;|\;\; {\forall i \in I}\;\; v_i \in
  \llbracket A_i \rrbracket_\sigma\}\\
\llbracket [C_i : A_i]_{i \in I} \rrbracket_\sigma =&\;
  \cup_{i \in I}\{C_i[v] \;\;|\;\; v \in \llbracket A_i \rrbracket_\sigma\}\\
\llbracket t \in A \rrbracket_\sigma =&\;
  \{v \in \llbracket A \rrbracket_\sigma \;\;|\;\; t\sigma \equiv v\}\\
\llbracket A \restriction t \equiv u \rrbracket_\sigma =&\;
  \left\{ 
    \begin{array}{l l}
      \llbracket A \rrbracket_\sigma & \text{if $t\sigma \equiv u\sigma$}\\
      \emptyset & \text{otherwise}
     \end{array} \right.
\end{align*}
\end{definition}

In the model, programs will realize closed formulas in two different ways
according to their syntactic class. The interpretation of values will be
given in terms of raw semantics, and the interpretation of terms in general
will be given in terms of truth values.
\begin{definition}
Let $A$ be a formula and $\sigma$ a substitution such that $A[\sigma]$ is closed.
We say that:
\begin{itemize}
\item $v \in \Lambda_v$ realizes $A[\sigma]$ if
  $v \in \llbracket A \rrbracket_\sigma$,
\item $t \in \Lambda$ realizes $A[\sigma]$ if
  $t \in \llbracket A \rrbracket_\sigma^{\bot\bot}$.
\end{itemize}
\end{definition}

\subsection{Contexts and typing rules}

Before giving the typing rules of our system we need to define contexts and
judgements. As explained in the introduction, several typing rules require a
value restriction in our context. This is reflected in typing rule by the
presence of two forms of judgements.

\begin{definition}
A context is an ordered list of hypotheses. In particular, it contains type
declarations for $\lambda$-variables and $\mu$-variables, and declaration of
term variables and predicate variables. In our case, a context also contains
term equivalences and inequivalences. A context is built using the following
grammar.
\begin{align*}
\Gamma, \Delta \bnfeq &\bullet
  \bnfor \Gamma, x : A
  \bnfor \Gamma, \alpha : \lnot A
  \bnfor \Gamma, a : Term
\\
  \bnfor &\Gamma, X_n : Pred_n
  \bnfor \Gamma, t \equiv u
  \bnfor \Gamma, t \not\equiv u
\end{align*}
A context $\Gamma$ is said to be valid if it is possible to derive
$\Gamma\;\text{Valid}$ using the rules of figure \ref{valid_context}. In
the following, every context will be considered valid implicitly.
\end{definition}
\begin{figure}
\center
\begin{prooftree}
\AxiomC{$\Gamma \;\; \text{Valid}$}
\AxiomC{$x \not\in dom(\Gamma)$}
\AxiomC{$FV(A) \subseteq dom(\Gamma) \cup \{x\}$}
\TrinaryInfC{$\Gamma, x : A \;\; \text{Valid}$}
\end{prooftree}

\begin{prooftree}
\AxiomC{$\Gamma \;\; \text{Valid}$}
\AxiomC{$\alpha \not\in dom(\Gamma)$}
\AxiomC{$FV(A) \subseteq dom(\Gamma)$}
\TrinaryInfC{$\Gamma, \alpha : \lnot A \;\; \text{Valid}$}
\end{prooftree}

\begin{prooftree}
\AxiomC{$\Gamma \;\; \text{Valid}$}
\AxiomC{$a \not\in dom(\Gamma)$}
\BinaryInfC{$\Gamma, a : Term \;\; \text{Valid}$}
\DisplayProof\quad\quad\quad
\AxiomC{$\Gamma \;\; \text{Valid}$}
\AxiomC{$X_n \not\in dom(\Gamma)$}
\BinaryInfC{$\Gamma, X_n : Pred_n \;\; \text{Valid}$}
\end{prooftree}

\begin{prooftree}
\AxiomC{$\Gamma \;\; \text{Valid}$}
\AxiomC{$FV(t) \cup FV(u) \subseteq dom(\Gamma)$}
\BinaryInfC{$\Gamma, t \equiv u \;\; \text{Valid}$}
\end{prooftree}

\begin{prooftree}
\AxiomC{$\Gamma \;\; \text{Valid}$}
\AxiomC{$FV(t) \cup FV(u) \subseteq dom(\Gamma)$}
\BinaryInfC{$\Gamma, t \not\equiv u \;\; \text{Valid}$}
\DisplayProof\quad\quad\quad
\AxiomC{\vbox{\vspace{0.96em}}} 
\UnaryInfC{$\bullet \;\; \text{Valid}$}
\end{prooftree}
\caption{Rules allowing the construction of a valid context.}
\label{valid_context}
\end{figure}

\begin{definition}
There are two forms of typing judgements:
\begin{itemize}
\item $\Gamma \vvdash v : A$ meaning that the value $v$ has type $A$ in
  context $\Gamma$,
\item $\Gamma \vdash t : A$ meaning that the term $t$ has type $A$ in
  context $\Gamma$.
\end{itemize}
\end{definition}
\begin{figure}
\centering

\begin{mproof}
\AxiomC{}
\RightLabel{$\text{ax}$}
\UnaryInfC{$\Gamma, x : A \vvdash x : A$}
\end{mproof}
\hfill
\begin{mproof}
\AxiomC{$\Gamma \vvdash v : A$}
\RightLabel{$\uparrow$}
\UnaryInfC{$\Gamma \vdash v : A$}
\end{mproof}
\hfill
\begin{mproof}
\AxiomC{$\Gamma \vdash v : A$}
\RightLabel{$\downarrow$}
\UnaryInfC{$\Gamma \vvdash v : A$}
\end{mproof}
\\[1.5em]

\begin{mproof}
\AxiomC{$\Gamma \vdash t : A \Rightarrow B$}
\AxiomC{$\Gamma \vdash u : A$}
\RightLabel{$\Rightarrow_e$}
\BinaryInfC{$\Gamma \vdash t\; u : B$}
\end{mproof}
\hfill
\begin{mproof}
\AxiomC{$\Gamma, x : A \vdash t : B$}
\RightLabel{$\Rightarrow_i$}
\UnaryInfC{$\Gamma \vvdash \lambda x\;t : A \Rightarrow B$}
\end{mproof}
\\[1.5em]

\begin{mproof}
\AxiomC{$\Gamma, \alpha : \lnot A \vdash t : A$}
\RightLabel{$\mu$}
\UnaryInfC{$\Gamma \vdash \mu \alpha\;t : A$}
\end{mproof}
\hfill
\begin{mproof}
\AxiomC{$\Gamma, \alpha : \lnot A \vdash t : A$}
\RightLabel{$\ast$}
\UnaryInfC{$\Gamma, \alpha : \lnot A \vdash t \ast \alpha : B$}
\end{mproof}
\\[1.5em]

\begin{mproof}
\AxiomC{$\Gamma \vvdash v : A$}
\RightLabel{$\in_i$}
\UnaryInfC{$\Gamma \vvdash v : v \in A$}
\end{mproof}
\hfill
\begin{mproof}
\AxiomC{$\Gamma, x : A, x \equiv u \vdash t : A$}
\RightLabel{$\in_e$}
\UnaryInfC{$\Gamma, x : u \in A \vdash t : A$}
\end{mproof}
\\[1.5em]

\begin{mproof}
\AxiomC{$\Gamma, u_1 \equiv u_2 \vdash t : A$}
\RightLabel{$\restriction_i$}
\UnaryInfC{$\Gamma, u_1 \equiv u_2 \vdash t : A \restriction u_1 \equiv u_2$}
\end{mproof}
\hfill
\begin{mproof}
\AxiomC{$\Gamma, x : A, u_1 \equiv u_2 \vdash t : B$}
\RightLabel{$\restriction_e$}
\UnaryInfC{$\Gamma, x : A \restriction u_1 \equiv u_2 \vdash t : B$}
\end{mproof}
\\[1.5em]

\begin{mproof}
\AxiomC{$\Gamma \vvdash v : A$}
\AxiomC{$a \not\in FV(\Gamma)$}
\RightLabel{$\forall_i$}
\BinaryInfC{$\Gamma \vvdash v : \forall a\;A$}
\end{mproof}
\hfill
\begin{mproof}
\AxiomC{$\Gamma \vdash t : \forall a\;A$}
\RightLabel{$\forall_e$}
\UnaryInfC{$\Gamma \vdash t : A[a := u]$}
\end{mproof}
\\[1.5em]

\begin{mproof}
\AxiomC{$\Gamma, y : A \vdash t : B$}
\AxiomC{$a \not\in FV(\Gamma, B) \cup TV(t)$}
\RightLabel{$\exists_e$}
\BinaryInfC{$\Gamma, y : \exists a\;A \vdash t : B$}
\end{mproof}
\hfill
\begin{mproof}
\AxiomC{$\Gamma \vdash t : A[a := u]$}
\RightLabel{$\exists_i$}
\UnaryInfC{$\Gamma \vdash t : \exists a\;A$}
\end{mproof}
\\[1.5em]

\begin{mproof}
\AxiomC{$\Gamma \vvdash v : A$}
\AxiomC{$X_n \not\in FV(\Gamma)$}
\RightLabel{$\forall_I$}
\BinaryInfC{$\Gamma \vvdash v : \forall X_n\;A$}
\end{mproof}
\hfill
\begin{mproof}
\AxiomC{$\Gamma \vdash t : \forall X_n\;A$}
\RightLabel{$\forall_E$}
\UnaryInfC{$\Gamma \vdash t : A[X_n := P]$}
\end{mproof}
\\[1.5em]

\begin{mproof}
\AxiomC{$\Gamma, x : A \vdash t : B$}
\AxiomC{$X_n \not\in FV(\Gamma, B)$}
\RightLabel{$\exists_E$}
\BinaryInfC{$\Gamma, x : \exists X_n\;A \vdash t : B$}
\end{mproof}
\hfill
\begin{mproof}
\AxiomC{$\Gamma \vdash t : A[X_n := P]$}
\RightLabel{$\exists_I$}
\UnaryInfC{$\Gamma \vdash t : \exists X_n\;A$}
\end{mproof}
\\[1.5em]

\begin{mproof}
\AxiomC{$[\Gamma \vvdash v_i : A_i]_{1 \leq i \leq n}$}
\RightLabel{$\times_i$}
\UnaryInfC{$\Gamma \vvdash \{l_i = v_i\}_{i=1}^n : \{l_i : A_i\}_{1 \leq i \leq n}$}
\end{mproof}
\hfill
\begin{mproof}
\AxiomC{$\Gamma \vvdash v : \{l_i : A_i\}_{1 \leq i \leq n}$}
\RightLabel{$\times_e$}
\UnaryInfC{$\Gamma \vdash v.l_i : A_i$}
\end{mproof}
\\[1.5em]

\begin{mproof}
\AxiomC{$\Gamma \vvdash v : A_i$}
\RightLabel{$+_i$}
\UnaryInfC{$\Gamma \vvdash C_i[v] : [C_i : A_i]_{1 \leq i \leq n}$}
\end{mproof}
\\[1.5em]

\begin{mproof}
\AxiomC{$\Gamma \vvdash v : [C_i : A_i]_{1 \leq i \leq n}$}
\AxiomC{$[\Gamma, x:A_i, C_i[x] \equiv v \vdash t_i : B]_{1 \leq i \leq n}$}
\RightLabel{$+_e$}
\BinaryInfC{$\Gamma \vdash case_v\;[C_i[x] \to t_i]_{1 \leq i \leq n} : B$}
\end{mproof}
\\[1.5em]
 
\begin{mproof}
\AxiomC{$\Gamma, w_1 \equiv w_2 \vdash t[x := w_1] : A$}
\RightLabel{$\equiv_{v,l}$}
\UnaryInfC{$\Gamma, w_1 \equiv w_2 \vdash t[x := w_2] : A$}
\end{mproof}
\hfill
\begin{mproof}
\AxiomC{$\Gamma, t_1 \equiv t_2 \vdash t[a := t_1] : A$}
\RightLabel{$\equiv_{t,l}$}
\UnaryInfC{$\Gamma, t_1 \equiv t_2 \vdash t[a := t_2] : A$}
\end{mproof}
\\[1.5em]
 
\begin{mproof}
\AxiomC{$\Gamma, w_1 \equiv w_2 \vdash t : A[x := w_1]$}
\RightLabel{$\equiv_{v,r}$}
\UnaryInfC{$\Gamma, w_1 \equiv w_2 \vdash t : A[x := w_2]$}
\end{mproof}
\hfill
\begin{mproof}
\AxiomC{$\Gamma, t_1 \equiv t_2 \vdash t : A[a := t_1]$}
\RightLabel{$\equiv_{t,r}$}
\UnaryInfC{$\Gamma, t_1 \equiv t_2 \vdash t : A[a := t_2]$}
\end{mproof}

\caption{Second-order type system.}
\label{pml2rules}
\end{figure}

The typing rules of the system are given in figure \ref{pml2rules}. Although
most of them are fairly usual, our type system differs in several ways. For
instance the last four rules are related to the extensionality of the
calculus. One can note the value restriction in several places: both
universal quantification introduction rules and the introduction of the
membership predicate. In fact, some value restriction is also hidden in the
rules for the elimination of the existential quantifiers and the elimination
rule for the restriction connective. These rules are presented in their
left-hand side variation, and only values can appear on the left of the
sequent. It is not surprising that elimination of an existential quantifier
requires value restriction as it is the dual of the introduction rule of a
universal quantifier.

An important and interesting difference with existing type systems
is the presence of $\uparrow$ and $\downarrow$. These two rules
allow one to go from one kind of sequent to the other when working on values.
Going from $\Gamma \vvdash v : A$ to $\Gamma \vdash v : A$ is
straight-forward. Going the other direction is the main motivation for our
model. This allows us to lift the value restriction expressed in the syntax
to a restriction expressed in terms of equivalence. For example, the two rules
\begin{prooftree}
\AxiomC{$\Gamma, t \equiv v \vdash t : A$}
\AxiomC{$a \not\in FV(\Gamma)$}
\RightLabel{$\forall_{i,\equiv}$}
\BinaryInfC{$\Gamma, t \equiv v \vdash t : \forall a\;A$}
\end{prooftree}
\begin{prooftree}
\AxiomC{$\Gamma, u \equiv v \vdash t : \Pi_{a:A} B$}
\AxiomC{$\Gamma, u \equiv v \vdash u : A$}
\RightLabel{$\Pi_{e,\equiv}$}
\BinaryInfC{$\Gamma, u \equiv v \vdash t\,u : B[a := u]$}
\end{prooftree}
can be derived in the system (see figure \ref{deriv}).
The value restriction can be removed similarly on every other rule. Thus,
judgements on values can be completely ignored by the user of the system.
Transition to value judgements will only happen internally.

\begin{figure}
\begin{prooftree}
\AxiomC{$\Gamma, t \equiv v \vdash t : A$}
\RightLabel{$\equiv_{t,l}$}
\UnaryInfC{$\Gamma, t \equiv v \vdash v : A$}
\RightLabel{$\downarrow$}
\UnaryInfC{$\Gamma, t \equiv v \vvdash v : A$}
\AxiomC{$a \not\in FV(\Gamma)$}
\RightLabel{$\forall_i$}
\BinaryInfC{$\Gamma, t \equiv v \vvdash v : \forall a\;A$}
\RightLabel{$\uparrow$}
\UnaryInfC{$\Gamma, t \equiv v \vdash v : \forall a\;A$}
\RightLabel{$\equiv_{t,l}$}
\UnaryInfC{$\Gamma, t \equiv v \vdash t : \forall a\;A$}
\end{prooftree}
\begin{prooftree}
\alwaysDoubleLine
\AxiomC{$\Gamma, u \equiv v \vdash t : \Pi_{a:A} B$}
\UnaryInfC{$\Gamma, u \equiv v \vdash t : \forall a (a \in A \Rightarrow B)$}
\alwaysSingleLine
\RightLabel{$\forall_e$}
\UnaryInfC{$\Gamma, u \equiv v \vdash t : u \in A \Rightarrow B[a := u]$}
\AxiomC{$\Gamma, u \equiv v \vdash u : A$}
\RightLabel{$\equiv_{t,l}$}
\UnaryInfC{$\Gamma, u \equiv v \vdash v : A$}
\RightLabel{$\downarrow$}
\UnaryInfC{$\Gamma, u \equiv v \vvdash v : A$}
\RightLabel{$\in_i$}
\UnaryInfC{$\Gamma, u \equiv v \vvdash v : v \in A$}
\RightLabel{$\uparrow$}
\UnaryInfC{$\Gamma, u \equiv v \vdash v : v \in A$}
\RightLabel{$\equiv_{t,l}$}
\UnaryInfC{$\Gamma, u \equiv v \vdash u : v \in A$}
\RightLabel{$\equiv_{t,r}$}
\UnaryInfC{$\Gamma, u \equiv v \vdash u : u \in A$}
\RightLabel{$\Rightarrow_e$}
\BinaryInfC{$\Gamma, u \equiv v \vdash t\,u : B[a := u]$}
\end{prooftree}
\caption{Derivation of the rules $\forall_{i,\equiv}$ and $\Pi_{e,\equiv}$.}
\label{deriv}
\end{figure}

\subsection{Adequacy}

We are now going to prove the soundness of our type system by showing that it
is compatible with our realizability model. This property is specified by the
following theorem which is traditionally called the adequacy lemma.
\begin{definition}
Let $\Gamma$ be a (valid) context. We say that the substitution $\sigma$
realizes $\Gamma$ if:
\begin{itemize}
\item for every $x : A$ in $\Gamma$ we have
  $\sigma(x) \in \llbracket A \rrbracket_\sigma$,
\item for every $\alpha : \lnot A$ in $\Gamma$ we have
  $\sigma(\alpha) \in \llbracket A \rrbracket_\sigma^\bot$,
\item for every $a : Term$ in $\Gamma$ we have
  $\sigma(a) \in \Lambda$,
\item for every $X_n : Pred_n$ in $\Gamma$ we have
  $\sigma(X_n) \in \Lambda^n \to \Lambda_v/\!\!\equiv$,
\item for every $t \equiv u$ in $\Gamma$ we have
  $t\sigma \equiv u\sigma$ and
\item for every $t \not\equiv u$ in $\Gamma$ we have
  $t\sigma \not\equiv u\sigma$.
\end{itemize}
\end{definition}

\begin{theorem}{\emph{(Adequacy.)}}\label{adequacy}
Let $\Gamma$ be a (valid) context, $A$ be a formula such that
$FV(A) \subseteq dom(\Gamma)$ and $\sigma$ be a substitution realizing
$\Gamma$.
\begin{itemize}
\item If $\Gamma \vvdash v : A$ then
  $v\sigma \in \llbracket A \rrbracket_\sigma$,
\item if $\Gamma \vdash t  : A$ then
  $t\sigma \in \llbracket A \rrbracket_\sigma^{\bot\bot}$.
\end{itemize}
\end{theorem}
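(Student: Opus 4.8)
The plan is to proceed by mutual induction on the two typing derivations $\Gamma \vvdash v : A$ and $\Gamma \vdash t : A$, treating one case per typing rule of figure \ref{pml2rules}. Before starting, I would record the standing invariant that every raw semantics $\llbracket A \rrbracket_\sigma$ is a set of values closed under $(\equiv)$: the definition only ever produces $(\equiv)$-closed sets, and this is confirmed by a trivial structural induction on $A$. This invariant is exactly what lets theorem \ref{biortho} be applied to any raw semantics. I would also use throughout that $\dbot$ is saturated, so that to establish membership in a truth value it suffices to exhibit a backward reduction landing in $\dbot$.

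The two interface rules carry the conceptual weight. For $\uparrow$, passing from $v\sigma \in \llbracket A \rrbracket_\sigma$ to $v\sigma \in \llbracket A \rrbracket_\sigma^{\bot\bot}$ is immediate from lemma \ref{orthosimple}. For $\downarrow$, I need the converse restricted to values: from $v\sigma \in \llbracket A \rrbracket_\sigma^{\bot\bot}$ and $v\sigma \in \Lambda_v$ to conclude $v\sigma \in \llbracket A \rrbracket_\sigma$. This is precisely $\llbracket A \rrbracket_\sigma^{\bot\bot} \cap \Lambda_v = \llbracket A \rrbracket_\sigma$, i.e. theorem \ref{biortho}, applicable thanks to the standing $(\equiv)$-closure invariant. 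This is the step around which the entire construction (the $\delta_{v,w}$ instruction and the proof of \ref{biortho}) was designed, and I expect it to be the genuine obstacle: it is the one case that is not a routine orthogonality manipulation and that depends on the paper's hard theorem, even though, once \ref{biortho} is in hand, the case itself is a one-line application.

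The value introduction rules ($\Rightarrow_i$, $\in_i$, $\forall_i$, $\forall_I$, $\times_i$, $+_i$) I would dispatch by unfolding the corresponding clause of the raw semantics and feeding in the induction hypotheses, the quantifier cases using the freshness side conditions to extend $\sigma$ by an arbitrary individual or predicate while preserving the fact that $\sigma$ realizes $\Gamma$. The term and elimination rules all follow a single orthogonality pattern: to prove $t\sigma \in \llbracket B \rrbracket_\sigma^{\bot\bot}$, I take an arbitrary $\pi \in \llbracket B \rrbracket_\sigma^\bot$ and chase $t\sigma \ast \pi$ through the abstract machine, reducing membership in $\dbot$ to the induction hypotheses by saturation. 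For $\Rightarrow_e$ this means relaying $\pi$ through $[t\sigma]\pi$ and then $v.\pi$ across the three $\beta$-reduction steps, checking at each stage that the manufactured stack lies in the appropriate falsity value. The classical rule $\mu$ reduces $(\mu\alpha\,t)\sigma \ast \pi$ to $(t\sigma)[\alpha := \pi] \ast \pi$ and observes that $\sigma[\alpha := \pi]$ realizes $\Gamma, \alpha : \lnot A$ because $\pi \in \llbracket A \rrbracket_\sigma^\bot$; the process rule $\ast$ is similar, using $\sigma(\alpha) \in \llbracket A \rrbracket_\sigma^\bot$.

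Finally, the elimination rules $+_e$, $\in_e$, $\restriction_e$, $\exists_e$ and $\exists_E$ each require building an extended substitution that realizes the enlarged context, and the secondary source of difficulty I anticipate is this bookkeeping: one must discharge the added equivalence, membership, or freshness obligations (by reflexivity, by the shape of the relevant raw semantics, or via the freshness conditions), and it is exactly here that the value restriction, relaxed to $(\equiv)$-hypotheses, is justified semantically. The four extensionality rules are where the equivalence relation re-enters: the value-substitution variants $\equiv_{v,l}$ and $\equiv_{v,r}$ follow from the standing invariant that raw and truth values are $(\equiv)$-closed, while the term-substitution variants $\equiv_{t,l}$ and $\equiv_{t,r}$ rest on the extensionality theorems \ref{extval} and \ref{extterm}, which ensure that replacing a term by an equivalent one inside $t$ or inside $A$ preserves the relevant semantic membership. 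Apart from the $\downarrow$/\ref{biortho} step, every remaining case is a mechanical instance of the saturation-plus-induction scheme.
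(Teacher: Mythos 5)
Your proposal is correct and follows essentially the same route as the paper's proof: induction on the typing derivation with one case per rule, lemma \ref{orthosimple} for $\uparrow$, theorem \ref{biortho} (enabled by the $(\equiv)$-closure of raw semantics) for $\downarrow$, saturation of $\dbot$ for the term and elimination rules, extended substitutions for the context-enlarging rules, and theorems \ref{extval} and \ref{extterm} for the four extensionality rules. You have correctly identified $\downarrow$ as the case the whole construction exists to support, exactly as in the paper.
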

\begin{proof}
We proceed by induction on the derivation of the judgement $\Gamma\vvdash v:A$
(resp. $\Gamma\vdash t:A$) and we reason by case on the last rule used.

\smallskip\noindent($\text{ax}$)
By hypothesis $\sigma$ realizes $\Gamma, x : A$ from which we directly
obtain $x\sigma \in \vsem{A}_\sigma$.

\smallskip\noindent($\uparrow$) and ($\downarrow$)
are direct consequences of lemma \ref{orthosimple} and theorem \ref{biortho}
respectively.

\smallskip\noindent($\Rightarrow_e$)
We need to prove that $t\sigma\;u\sigma \in \tsem{B}_\sigma$, hence we take
$\pi \in \ssem{B}_\sigma$ and show $t\sigma\;u\sigma \ast \pi \in \dbot$.
Since $\dbot$ is saturated, we can take a reduction step and show
$u\sigma \ast [t\sigma]\pi \in \dbot$. By induction hypothesis $u\sigma \in
\tsem{A}_\sigma$ so we only have to show $[t\sigma]\pi \in \ssem{A}_\sigma$.
To do so we take $v \in \vsem{A}_\sigma$ and show $v \ast [t\sigma]\pi \in
\dbot$. Here we can again take a reduction step and show $t\sigma \ast v.\pi
\in \dbot$. By induction hypothesis we have $t\sigma \in
\tsem{A \Rightarrow B}_\sigma$, hence it is enough to show
$v.\pi \in \ssem{A \Rightarrow B}_\sigma$. We now take a value
$\lambda x\;t_x \in \vsem{A \Rightarrow B}_\sigma$ and show that
$\lambda x\;t_x \ast v.\pi \in \dbot$. We then apply again a reduction step
and show $t_x[x := v] \ast \pi \in \dbot$. Since $\pi \in \ssem{B}_\sigma$
we only need to show $t_x[x := v] \in \tsem{B}_\sigma$ which is true by
definition of $\vsem{A \Rightarrow B}_\sigma$.

\smallskip\noindent($\Rightarrow_i$)
We need to show $\lambda x\;t\sigma \in \vsem{A \Rightarrow B}_\sigma$ so
we take $v \in \vsem{A}_\sigma$ and show $t\sigma[x \!:=\! v] \in
\tsem{B}_\sigma$. Since $\sigma[x := v]$ realizes $\Gamma, x:A$ we can
conclude using the induction hypothesis.

\smallskip\noindent($\mu$)
We need to show that $\mu \alpha\;t\sigma \in \tsem{A}_\sigma$ hence we take
$\pi \in \ssem{A}_\sigma$ and show $\mu \alpha\;t\sigma \ast \pi \in \dbot$.
Since $\dbot$ is saturated, it is enough to show $t\sigma[\alpha := \pi] \ast
\pi \in \dbot$. As $\sigma[\alpha := \pi]$ realizes $\Gamma, \alpha:\lnot A$
we conclude by induction hypothesis.

\smallskip\noindent($\ast$)
We need to show $t\sigma \ast \alpha\sigma \in \tsem{B}_\sigma$, hence we take
$\pi \in \ssem{B}_\sigma$ and show that $(t\sigma \ast \alpha\sigma) \ast \pi
\in \dbot$. Since $\dbot$ is saturated, we can take a reduction step and show
$t\sigma \ast \alpha\sigma \in \dbot$. By induction hypothesis $t\sigma \in
\tsem{A}_\sigma$ hence it is enough to show $\alpha\sigma \in \ssem{A}_\sigma$
which is true by hypothesis.

\smallskip\noindent($\in_i$)
We need to show $v\sigma \in \vsem{v \in A}_\sigma$. We have $v\sigma \in
\vsem{A}_\sigma$ by induction hypothesis, and $v\sigma \equiv v\sigma$ by
reflexivity of $(\equiv)$.

\smallskip\noindent($\in_e$)
By hypothesis we know that $\sigma$ realizes $\Gamma, x : u \in A$. To be able
to conclude using the induction hypothesis, we need to show that $\sigma$
realizes $\Gamma, x : A, x \equiv u$. Since we have $\sigma(x) \in
\vsem{u \in A}_\sigma$, we obtain that $x\sigma \in \vsem{A}_\sigma$ and
$x\sigma \equiv u\sigma$ by definition of $\vsem{u \in A}_\sigma$.

\smallskip\noindent($\restriction_i$)
We need to show $t\sigma \in \tsem{A \restriction u_1 \equiv u_2}_\sigma$.
By hypothesis $u_1\sigma \equiv u_2\sigma$, hence $\vsem{A \restriction u_1
\equiv u_2}_\sigma = \vsem{A}_\sigma$. Consequently, it is enough to show
that $t\sigma \in \tsem{A}_\sigma$, which is exactly the induction hypothesis.

\smallskip\noindent($\restriction_e$)
By hypothesis we know that $\sigma$ realizes $\Gamma, x : A \restriction u_1
\equiv u_2$. To be able to use the induction hypothesis, we need to show that
$\sigma$ realizes $\Gamma, x : A, u_1 \equiv u_2$. Since we have $\sigma(x)
\in \vsem{A \restriction u_1 \equiv u_2}_\sigma$, we obtain that $x\sigma \in
\vsem{A}_\sigma$ and that $u_1\sigma \equiv u_2\sigma$ by definition of
$\vsem{A \restriction u_1 \equiv u_2}_\sigma$.

\smallskip\noindent($\forall_i$)
We need to show that $v\sigma \in \vsem{\forall a\; A}_\sigma = \bigcap_{t
\in \Lambda} \vsem{A}_{\sigma[a := t]}$ so we take $t \in \Lambda$ and show
$v\sigma \in \llbracket A \rrbracket_{\sigma[a := t]}$. This is true by
induction hypothesis since $a \not\in FV(\Gamma)$ and hence $\sigma[a:=t]$
realizes $\Gamma$.

\smallskip\noindent($\forall_e$)
We need to show $t\sigma \in \tsem{A[a := u]}_\sigma = \tsem{A}_{\sigma[a :=
u\sigma]}$ for some $u \in \Lambda$. By induction hypothesis we know $t\sigma
\in \tsem{\forall a\;A}_\sigma$, hence we only need to show that
$\tsem{\forall a\; A}_\sigma \subseteq \tsem{A}_{\sigma[a := u\sigma]}$. By
definition we have $\vsem{\forall a\; A}_\sigma \subseteq \vsem{A}_{\sigma[a
:= u\sigma]}$ so we can conclude using lemma \ref{orthoblabla}.

\smallskip\noindent($\exists_e$)
By hypothesis we know that $\sigma$ realizes $\Gamma, x : \exists a\;A$. In
particular, we know that $\sigma(x) \in \vsem{\exists a\;A}_\sigma$, which
means that there is a term $u \in \Lambda^\ast$ such that $\sigma(x) \in
\vsem{A}_{\sigma[a := u]}$. Since $a \notin FV(\Gamma)$, we obtain that the
substitution $\sigma[a := u]$ realizes the context $\Gamma, x : A$. Using the
induction hypothesis, we finally get $t\sigma = t\sigma[a := u] \in
\tsem{B}_{\sigma[a := u]} = \tsem{B}_\sigma$ since $a \notin TV(t)$ and $a
\notin FV(B)$.

\smallskip\noindent($\exists_i$)
The proof for this rule is similar to the one for \emph{($\forall_e$)}. We
need to show that $\tsem{A[a := u]}_\sigma = \tsem{A}_{\sigma[a := u\sigma]}
\subseteq \tsem{\exists a\;A}_\sigma$. This follows from lemma
\ref{orthoblabla} since $\vsem{A}_{\sigma[a := u\sigma]} \subseteq
\vsem{\exists a\;A}_\sigma$ by definition.

\smallskip\noindent($\forall_I$), $(\forall_E)$, $(\exists_E)$ and $(\exists_I)$
are similar to similar to ($\forall_i$), ($\forall_e$), ($\exists_e$) and
($\exists_i$).

\smallskip\noindent($\times_i$)
We need to show that $\{l_i = v_i\sigma\}_{i \in I} \in
\vsem{\{l_i : A_i\}_{i \in I}}_\sigma$. By definition we need to show that
for all $i \in I$ we have $v_i\sigma \in \vsem{A_i}_\sigma$. This is
immediate by induction hypothesis.

\smallskip\noindent($\times_e$)
We need to show that $v\sigma.l_i \in \tsem{A_i}_\sigma$ for some $i \in I$.
By induction hypothesis we have $v\sigma \in
\vsem{\{l_i : A_i\}_{i \in I}}_\sigma$ and hence $v$ has the form
$\{l_i = v_i\}_{i \in I}$ with $v_i\sigma \in \vsem{A_i}_\sigma$. Let us now
take $\pi \in \ssem{A_i}_\sigma$ and show that
$\{l_i = v_i\sigma\}_{i \in I}.l_i \ast \pi \in \dbot$. Since $\dbot$ is
saturated, it is enough to show $v_i\sigma \ast \pi \in \dbot$. This is true
since $v_i\sigma \in \vsem{A_i}_\sigma$ and $\pi \in \ssem{A_i}_\sigma$.

\smallskip\noindent($+_i$)
We need to show $C_i[v\sigma] \in \vsem{[C_i : A_i]_{i \in I}}_\sigma$ for
some $i \in I$. By induction hypothesis $v\sigma \in \vsem{A_i}_\sigma$ and
hence we can conclude by definition of $\vsem{[C_i : A_i]_{i \in I}}_\sigma$.

\smallskip\noindent($+_e$)
We need to show $case_{v\sigma}\;[C_i[x] \to t_i\sigma]_{i \in I} \in
\tsem{B}_\sigma$. By induction hypothesis $v\sigma \in \vsem{[C_i of A_i]_{i
\in I}}_\sigma$ which means that there is $i \in I$ and $w \in
\vsem{A_i}_\sigma$ such that $v\sigma = C_i[w]$. We take $\pi \in
\ssem{B}_\sigma$ and show $\tcase{C_i[w]}{C_i[x] \to t_i\sigma}_{i \in I}
\ast \pi \in \dbot$. Since $\dbot$ is saturated, it is enough to show
$t_i\sigma[x := w] \ast \pi \in \dbot$. It remains to show that
$t_i\sigma[x := w] \in \tsem{B}_\sigma$. To be able to conclude using
the induction hypothesis we need to show that $\sigma[x := w]$ realizes
$\Gamma, x : A_i, C_i[x] \equiv v$. This is true since $\sigma$ realizes
$\Gamma$, $w \in \vsem{A_i}_\sigma$ and $C_i[w] \equiv v\sigma$ by reflexivity.

\smallskip\noindent($\equiv_{v,l}$)
We need to show $t[x := w_1]\sigma = t\sigma[x := w_1\sigma] \in
\vsem{A}_\sigma$. By hypothesis we know that $w_1\sigma \equiv w_2\sigma$
from which we can deduce $t\sigma[x := w_1\sigma] \equiv t\sigma[x :=
w_2\sigma]$ by extensionality (theorem \ref{extval}). Since $\vsem{A}_\sigma$
is closed under $(\equiv)$ we can conclude using the induction hypothesis.

\smallskip\noindent($\equiv_{t,l}$), ($\equiv_{v,r}$) and ($\equiv_{t,r}$)
are similar to ($\equiv_{v,l}$), using extensionality (theorem \ref{extval}
and theorem \ref{extterm}).
\end{proof}

\begin{remark}
For the sake of simplicity we fixed a pole $\dbot$ at the beginning of the
current section. However, many of the properties presented here (including
the adequacy lemma) remain valid with similar poles. We will make use of
this fact in the proof of the following theorem.
\end{remark}

\begin{theorem}{\emph{(Safety.)}}\label{safety}
Let $\Gamma$ be a context, $A$ be a formula such that
$FV(A) \subseteq dom(\Gamma)$ and $\sigma$ be a substitution realizing
$\Gamma$. If $t$ is a term such that $\Gamma \vdash t : A$ and if $A[\sigma]$
is pure (i.e. it does not contain any $\_ \Rightarrow \_$), then for every
stack $\pi \in \llbracket A \rrbracket_\sigma^\bot$ there is a value
$v \in \llbracket A \rrbracket_\sigma$ and $\alpha \in \mathcal{V}_\mu$ such
that ${t\sigma \ast \pi} \reds {v \ast \alpha}$.
\end{theorem}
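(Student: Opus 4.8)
The plan is to combine the adequacy theorem with a separation argument in the style of Theorem \ref{biortho}, the role of purity being to guarantee that the value of type $A$ the machine produces is eventually delivered to a continuation variable rather than consumed as a function.

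First I would apply Theorem \ref{adequacy} with the pole $\dbot$ already fixed: from $\Gamma \vdash t : A$ and the fact that $\sigma$ realizes $\Gamma$ we get $t\sigma \in \tsem{A}_\sigma$. Feeding in the hypothesis $\pi \in \ssem{A}_\sigma$ yields $t\sigma \ast \pi \in \dbot$, that is $t\sigma \ast \pi \converge_\red$, so the computation reaches a final state $v_0 \ast \alpha_0$ with $v_0 \in \Lambda_v$ and $\alpha_0 \in \mathcal{V}_\mu$. All that remains is to show that along this reduction we genuinely cross a state $v \ast \alpha$ whose value lies in the raw semantics $\vsem{A}_\sigma$; I would aim at the sharper claim that the final value itself satisfies $v_0 \in \vsem{A}_\sigma$.

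To obtain $v_0 \in \vsem{A}_\sigma$ I would argue by contradiction and reuse the separation device of Theorem \ref{biortho}. By that theorem it suffices to prove $v_0 \in \tsem{A}_\sigma$, so assume instead $v_0 \notin \vsem{A}_\sigma$. Since $\vsem{A}_\sigma$ is closed under $(\equiv)$, the stack $\rho_0 = [\lambda x\,\delta_{x,v_0}]\,\gamma$ (with $\gamma$ fresh) lies in $\ssem{A}_\sigma$: for every $w \in \vsem{A}_\sigma$ we have $w \not\equiv v_0$, hence $w \ast \rho_0 \reds \delta_{w,v_0}\ast\gamma \red w \ast \gamma \in \dbot$, whereas $v_0 \ast \rho_0 \reds \delta_{v_0,v_0}\ast\gamma$ is stuck by reflexivity of $(\equiv)$. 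It then remains to transport this divergence back along $t\sigma \ast \pi \reds v_0 \ast \alpha_0$: redirecting the continuation $\alpha_0$ to $\rho_0$ by Lemma \ref{redcompatall} (extended to $\red$) should produce a process $t\sigma \ast \pi' \reds v_0 \ast \rho_0$ that diverges, contradicting $t\sigma \in \tsem{A}_\sigma$ as soon as $\pi' \in \ssem{A}_\sigma$.

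The hard part is exactly this transport step, and it is where purity of $A$ is indispensable. Two obstacles surface. First, $\alpha_0$ may occur free in $v_0$ — even data of pure type can carry a captured continuation inside a $\lambda$-abstraction — so a naive substitution $[\alpha_0 := \rho_0]$ need not leave $v_0$ fixed. Second, it is not clear that the redirected stack stays in $\ssem{A}_\sigma$. For an impure $A$ the statement is outright false: a stack $v.\pi \in \ssem{A}_\sigma$ pushes an argument, the machine consumes it and runs past the value of type $A$, so no value of $\vsem{A}_\sigma$ ever faces a bare stack variable. Purity is what excludes these argument-pushing stacks. A tempting shortcut — rerunning adequacy against a saturated pole $\dbot_0$ observing ``reaching a value of $\vsem{A}_\sigma$'', as licensed by the remark above — does not work off the shelf, since the analogue of Theorem \ref{biortho} fails for $\dbot_0$ at subtypes whose raw semantics are not contained in $\vsem{A}_\sigma$, and such subtypes occur freely in the derivation. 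The genuine proof must therefore use purity to analyse the shape of $\pi$ directly, presumably by induction on the pure connectives $\{l_i:A_i\}$, $[C_i:A_i]$, $t \in A$, $A \restriction t \equiv u$ and the quantifiers, so as to locate the $\vsem{A}_\sigma$-value along the reduction while sidestepping both the capture problem and the biorthogonality failure; reconciling these is the delicate heart of the argument.
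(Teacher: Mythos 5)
Your proposal diverges from the paper at the crucial point and does not close the argument. The paper's proof is precisely the ``tempting shortcut'' you reject: it re-runs the adequacy theorem against the tailored saturated pole
$\dbot_A = \{ p \;|\; p \reds v \ast \alpha \land v \in \vsem{A}_\sigma\}$
(well-defined because purity makes $\vsem{A}_\sigma$ pole-independent), so that $t\sigma \ast \pi \in \dbot_A$ is literally the desired conclusion; the justification is the remark preceding the theorem that adequacy ``remains valid with similar poles''. Your objection to this route --- that the analogue of Theorem \ref{biortho} for $\dbot_A$ is problematic at subformulas whose raw semantics is not contained in $\vsem{A}_\sigma$ --- is a genuine subtlety that the paper glosses over, but it is repairable: when $\vsem{A}_\sigma$ contains some value $v_A$, the separating stack in the proof of Theorem \ref{biortho} can be modified to discard its result and return $v_A$ (e.g.\ $[\lambda x\,(\lambda y\,v_A)\,\delta_{x,v}]\,\alpha$), which restores the separation property for $\dbot_A$; the degenerate case $\vsem{A}_\sigma = \emptyset$ makes the statement vacuous via consistency. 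So the objection does not invalidate the paper's strategy, it only shows the remark deserves a proof.

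The gap in your own route is that you never carry out the step on which everything depends. After obtaining $t\sigma \ast \pi \converge_\red$ with final state $v_0 \ast \alpha_0$ (correct, and identical to what the paper's pole-switch delivers for free), you must show $v_0 \in \vsem{A}_\sigma$, and your plan --- redirect $\alpha_0$ to the separating stack $\rho_0 = [\lambda x\,\delta_{x,v_0}]\gamma$ and contradict $t\sigma \in \tsem{A}_\sigma$ --- founders on exactly the two obstacles you name: the substitution $[\alpha_0 := \rho_0]$ acts on $t\sigma$, on $\pi$ and possibly on $v_0$ itself, so you no longer have a statement about $t\sigma$ against a stack in $\ssem{A}_\sigma$; and you give no argument that the redirected stack lies in $\ssem{A}_\sigma$. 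Saying that ``the genuine proof must use purity to analyse the shape of $\pi$'' and that reconciling the difficulties ``is the delicate heart of the argument'' is a description of the missing proof, not the proof. The information ``which value the machine halts with'' is simply not extractable from membership in $\tsem{A}_\sigma$ relative to the termination pole alone; it has to be built into the pole, which is what the paper does and what your sketch avoids without supplying a substitute.
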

\begin{proof}
We do a proof by realizability using the following pole.
$$\dbot_A = \{ p \in \Lambda\times\Pi\;|\;p \reds v \ast \alpha \;\land\; v
\in \llbracket A \rrbracket_\sigma\}$$
It is well-defined as $A$ is pure and hence $\llbracket A \rrbracket_\sigma$
does not depend on the pole. Using the adequacy lemma (theorem \ref{adequacy})
with $\dbot_A$ we obtain $t\sigma \in \llbracket A \rrbracket_\sigma^{\bot\bot}$. Hence for
every stack $\pi \in \llbracket A \rrbracket_\sigma^\bot$ we have
${t\sigma \ast \pi} \in \dbot_A$. We can then conclude using the definition
of the pole $\dbot_A$.
\end{proof}
\begin{remark}
It is easy to see that if $A[\sigma]$ is closed and pure then
$v \in \llbracket A \rrbracket_\sigma$ implies that $\bullet \vdash v : A$.
\end{remark}

\begin{theorem}{\emph{(Consistency.)}}\label{consistency}
There is no $t$ such that $\bullet \vdash t : \bot$.
\end{theorem}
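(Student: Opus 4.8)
The plan is to derive a contradiction from the combination of adequacy (theorem~\ref{adequacy}) and the consistency of the pole (theorem~\ref{poleconsist}). First I would assume, towards a contradiction, that some $t$ satisfies $\bullet \vdash t : \bot$. Since $t$ is typed in the empty context it is closed (terms typable under $\bullet$ contain no free variables), and the empty substitution $\sigma$ trivially realizes $\bullet$. Adequacy then gives $t = t\sigma \in \tsem{\bot}_\sigma$.

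The crux of the argument is the computation of the raw semantics of $\bot$. Recalling the encoding $\bot = \forall X_0\; X_0$, the definition of the semantics yields $\vsem{\bot}_\sigma = \bigcap_{P} \vsem{X_0}_{\sigma[X_0 := P]}$, where $P$ ranges over all nullary predicates, i.e. over every subset of $\Lambda_v/\!\!\equiv$. Since $\vsem{X_0}_{\sigma[X_0 := P]} = P$, this is an intersection indexed by all such $P$; as it includes $P = \emptyset$, I would conclude $\vsem{\bot}_\sigma = \emptyset$.

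From there the two orthogonality steps are immediate. The condition defining $\emptyset^\bot$ quantifies over $v \in \emptyset$ and is therefore vacuous, so $\emptyset^\bot = \Pi$; consequently $\tsem{\bot}_\sigma = \{t \in \Lambda \mid \forall \pi \in \Pi,\ t \ast \pi \in \dbot\}$. Thus $t \in \tsem{\bot}_\sigma$ means precisely that $t \ast \pi \in \dbot$ for every stack $\pi$. But theorem~\ref{poleconsist} produces, for the closed term $t$, a stack $\pi$ with $t \ast \pi \notin \dbot$, which is the desired contradiction. Hence no such $t$ can exist.

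I expect every step here to be routine except the identification $\vsem{\bot}_\sigma = \emptyset$: the whole argument rests on reading off that the intersection defining $\forall X_0$ ranges over all predicates, the empty one included, and on checking that biorthogonality sends the empty set of values to exactly the terms lying in the pole against every stack. Given how directly this follows from the definitions, I do not anticipate a genuine obstacle.
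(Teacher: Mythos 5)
Your proof is correct and follows essentially the same route as the paper: adequacy gives $t \in \tsem{\bot}_\sigma$, the raw semantics of $\bot = \forall X_0\;X_0$ is empty (so its falsity value is all of $\Pi$), and theorem~\ref{poleconsist} supplies a stack refuting membership in $\tsem{\bot}_\sigma$. The only difference is that you spell out the computation $\vsem{\bot}_\sigma = \emptyset$, which the paper simply asserts.
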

\begin{proof}
Let us suppose that $\bullet \vdash t : \bot$. Using adequacy (theorem
\ref{adequacy}
) we obtain that $t \in \llbracket \bot \rrbracket_\sigma^{\bot\bot}$. Since
$\llbracket \bot \rrbracket_\sigma = \emptyset$ we know that
$\llbracket \bot \rrbracket_\sigma^\bot = \Pi$ by definition. Now using
theorem \ref{poleconsist} we obtain
$\llbracket \bot \rrbracket_\sigma^{\bot\bot} = \emptyset$. This is a
contradiction.
\end{proof}

\section{Deciding Program Equivalence}
The type system given in figure \ref{pml2rules} does
not provide any way of discharging an equivalence from the context. As a
consequence the truth of an equivalence cannot be used. Furthermore, an
equational contradiction in the context cannot be used to derive falsehood.
To address these two problems, we will rely on a partial decision procedure
for the equivalence of terms. Such a procedure can be easily implemented
using an algorithm similar to Knuth-Bendix, provided that we are able to
extract a set of equational axioms from the definition of $(\equiv)$. In
particular, we will use the following lemma to show that several reduction
rules are contained in $(\equiv)$.

\begin{lemma}\label{equivlem}
Let $t$ and $u$ be terms. If for every stack $\pi \in \Pi$ there is
$p \in \Lambda\times\Pi$ such that $t \ast \pi \succ^* p$ and
$u \ast \pi \succ^* p$ then $t \equiv u$.
\end{lemma}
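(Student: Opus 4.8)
The plan is to unfold the definition of $(\equiv)$ and reduce the whole statement to a comparison of convergence at each finite level. To prove $t \equiv u$ I must show that for every $i \in \bbN$, every stack $\pi \in \Pi$ and every substitution $\sigma$ we have $t\sigma \ast \pi \converge_i \Leftrightarrow u\sigma \ast \pi \converge_i$, and by the symmetry of the hypothesis in $t$ and $u$ it is enough to prove one implication. So I would fix $i$, $\pi$, $\sigma$ and aim to produce a single process that is a common $(\succ^{*})$-reduct of $t\sigma \ast \pi$ and $u\sigma \ast \pi$.

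First I would isolate the elementary fact that sharing a $(\succ^{*})$-reduct forces agreement of $\converge_i$. Since $(\succ) \subseteq (\red_i)$ for every $i$, and since $(\red_i)$ is deterministic — each process has at most one reduct, because the $\delta$-rule of $(\red_i)$ only fires on $\delta$-like states, no $(\succ)$-rule applies to such states, and $(\succ)$ is itself deterministic — a process and any of its $(\succ^{*})$-reducts converge at level $i$ exactly together. (A process possessing a $(\succ)$-reduct is not $\delta$-like, so along a $(\succ^{*})$-reduction the $(\red_i)$-reduction coincides step by step; the backward implication is just saturation, and the forward one uses determinism.) Consequently, if $a \ast \pi$ and $b \ast \pi$ both reduce by $(\succ^{*})$ to a common $p$, then $a \ast \pi \converge_i \Leftrightarrow p \converge_i \Leftrightarrow b \ast \pi \converge_i$.

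The remaining task is to transport the hypothesis, which only speaks about the bare terms $t$ and $u$, to the substituted terms $t\sigma$ and $u\sigma$. I would apply the hypothesis to obtain a common reduct of $t \ast \pi$ and $u \ast \pi$ and then push $\sigma$ through both reductions with lemma \ref{redcompatall}. The subtlety here — and the point I expect to be the main obstacle — is that lemma \ref{redcompatall} substitutes the \emph{entire} process, so it produces $t\sigma \ast \pi\sigma$ rather than the desired $t\sigma \ast \pi$, and these differ whenever $FV(\pi)$ meets $dom(\sigma)$. I would remove this mismatch by a renaming. Let $S = FV(\pi) \cap dom(\sigma)$, choose fresh variables (absent from $\pi$, $t$, $u$, $dom(\sigma)$ and the range of $\sigma$) and a kind-respecting renaming $\xi$ carrying $S$ onto them. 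Applying the hypothesis to the stack $\pi\xi$, whose free variables now avoid $dom(\sigma)$, and then substituting $\sigma$ leaves the stack untouched (as $(\pi\xi)\sigma = \pi\xi$), giving $t\sigma \ast \pi\xi \succ^{*} p\sigma$ and $u\sigma \ast \pi\xi \succ^{*} p\sigma$. Finally I would apply $\xi^{-1}$: by freshness it fixes $t\sigma$ and $u\sigma$ and sends $\pi\xi$ back to $\pi$, so $(p\sigma)\xi^{-1}$ is a common $(\succ^{*})$-reduct of $t\sigma \ast \pi$ and $u\sigma \ast \pi$.

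Combining the two steps yields $t\sigma \ast \pi \converge_i \Leftrightarrow u\sigma \ast \pi \converge_i$ for all $i$, $\pi$, $\sigma$, which is precisely $t \equiv u$ by definition. I expect the determinism observation of the second paragraph and the renaming bookkeeping of the third to be the only delicate ingredients; the rest is a direct use of lemma \ref{redcompatall}, the inclusion $(\succ) \subseteq (\red_i)$, and the definition of $\converge_i$.
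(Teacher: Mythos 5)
Your proof is correct and follows essentially the same route as the paper's: push the common $(\succ^{*})$-reduct through $(\succ)\subseteq(\red_i)$ and lemma \ref{redcompatall}, then read off convergence at every level $i$, for every $\pi$ and $\sigma$. You additionally spell out two points the paper's one-paragraph proof leaves implicit --- the determinism of $(\red_i)$ needed for the forward direction of ``a common reduct forces equiconvergence'', and the renaming $\xi$ needed because lemma \ref{redcompatall} literally yields $t\sigma \ast \pi\sigma$ rather than $t\sigma \ast \pi$ --- both of which are genuine (if routine) gaps in the published argument.
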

\begin{proof}
Since $(\succ) \subseteq (\red_i)$ for every $i \in \mathbb{N}$, we can
deduce that $t \ast \pi \red_i^* p$ and $u \ast \pi \red_i^* p$ for every
$i \in \mathbb{N}$. Using lemma \ref{redcompatall}
we can deduce that for every substitution $\sigma$ we have
$t\sigma \ast \pi \red_i^* p\sigma$ and $u\sigma \ast \pi \red_i^* p\sigma$
for all $i \in \mathbb{N}$. Consequently we obtain $t \equiv u$.
\end{proof}
The equivalence relation contains call-by-value $\beta$-reduction,
projection on records and case analysis on variants.
\begin{theorem}
For every $x \in \mathcal{V}_\lambda$, $t \in \Lambda$ and $v \in \Lambda_v$
we have $(\lambda x\;t) v \equiv t[x := v]$.
\end{theorem}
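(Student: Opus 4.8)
The plan is to apply Lemma \ref{equivlem} directly, which reduces the claim to exhibiting, for every stack $\pi$, a common process $p$ reachable from both $(\lambda x\;t)\,v \ast \pi$ and $t[x := v] \ast \pi$ under $(\succ^{*})$. The natural candidate is $p = t[x := v] \ast \pi$ itself, so that the right-hand side reaches it trivially (in zero steps) and all the work lies in reducing the left-hand side to it.

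First I would fix an arbitrary stack $\pi \in \Pi$ and trace the abstract machine on the process $(\lambda x\;t)\,v \ast \pi$. Applying the rule for application, it steps to $v \ast [\lambda x\;t]\,\pi$. Since $v$ is a value facing a stack frame, the second reduction rule fires, giving $\lambda x\;t \ast v.\pi$. Finally, the $\beta$-rule for a $\lambda$-abstraction facing a value-topped stack yields $t[x := v] \ast \pi$. Chaining these, we obtain
$$ (\lambda x\;t)\,v \ast \pi \;\succ^{*}\; t[x := v] \ast \pi $$
in exactly three steps, while $t[x := v] \ast \pi \succ^{*} t[x := v] \ast \pi$ holds by reflexivity. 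Taking $p = t[x := v] \ast \pi$, the hypothesis of Lemma \ref{equivlem} is met for this (and hence every) $\pi$, and the lemma immediately gives $(\lambda x\;t)\,v \equiv t[x := v]$.

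I do not expect a genuine obstacle here; the proof is a short computation in the machine. The one point requiring care is that, in this right-to-left call-by-value presentation, an already-evaluated argument $v$ does not trigger an immediate $\beta$-step: the machine must still push the function into a stack frame, swap the value against it, and only then contract. One must therefore apply the three distinct machine rules in the correct order rather than invoking a single monolithic $\beta$-reduction. Because the right-hand side needs no reduction at all, the shared reduct is simply the final state of the left-hand side's evaluation, which is exactly what makes Lemma \ref{equivlem} applicable.
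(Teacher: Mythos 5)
Your proof is correct and is exactly the argument the paper intends: the paper's proof is simply ``Immediate using lemma \ref{equivlem}'', and your three-step machine trace ending at the common reduct $t[x := v] \ast \pi$ is the computation being elided. Nothing to add.
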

\begin{proof}
  Immediate using lemma \ref{equivlem}.
\end{proof}
\begin{theorem}
For all $k$ such that $1 \leq k \leq n$ we have the following equivalences.
$$(\lambda x\;t) v \equiv t[x := v] \hspace{4em}
\tcase{C_k[v]}{C_i[x_i] \to t_i}_{1 \leq i \leq n} \equiv t_k[x_k := v]$$
\end{theorem}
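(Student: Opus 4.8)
The plan is to apply Lemma~\ref{equivlem} to each of the two displayed equivalences separately. In both cases I would fix an arbitrary stack $\pi \in \Pi$ and exhibit a single process $p$ to which \emph{both} sides reduce under $(\succ^{*})$; the lemma then delivers the equivalence immediately.

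For the $\beta$-reduction equivalence $(\lambda x\,t)\,v \equiv t[x := v]$ (which is in fact just the statement of the preceding theorem), I would trace the machine from $(\lambda x\,t)\,v \ast \pi$. The first reduction rule gives $v \ast [\lambda x\,t]\pi$, the second gives $\lambda x\,t \ast v.\pi$, and the third (the genuine $\beta$-step) gives $t[x := v] \ast \pi$. Hence $(\lambda x\,t)\,v \ast \pi \succ^{*} t[x := v] \ast \pi$, while the right-hand process $t[x := v] \ast \pi$ is already in that state (zero steps). Taking $p = t[x := v] \ast \pi$ discharges the hypothesis of Lemma~\ref{equivlem}.

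For the case-analysis equivalence, I would observe that from $1 \leq k \leq n$ we have $k \in \{1, \dots, n\}$, so the side condition of the last reduction rule is met and it fires in a single step: $\tcase{C_k[v]}{C_i[x_i] \to t_i}_{1 \leq i \leq n} \ast \pi \succ t_k[x_k := v] \ast \pi$. Again the right-hand process $t_k[x_k := v] \ast \pi$ already coincides with this state, so taking $p = t_k[x_k := v] \ast \pi$ completes the argument via Lemma~\ref{equivlem}.

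I do not expect any genuine obstacle here. The only points to verify are that each reduction rule fires on the process built with an \emph{arbitrary} $\pi$ (they do, since none of the steps inspects $\pi$) and that the constructor side condition $k \in I$ holds (guaranteed by $1 \leq k \leq n$). The statement is thus an immediate corollary of Lemma~\ref{equivlem}, in exactly the same style as the preceding theorem.
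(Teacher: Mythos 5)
Your proof is correct and follows exactly the route the paper takes: the paper's own proof is simply ``Immediate using lemma~\ref{equivlem}'', and your explicit traces of the machine steps (three $\succ$-steps for the $\beta$-redex, one for the case analysis, with $p$ taken to be the common reduct) are precisely the details being elided there.
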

\begin{proof}
  Immediate using lemma \ref{equivlem}.
\end{proof}

To observe contradictions, we also need to derive some inequivalences on
values. For instance, we would like to deduce a contradiction if two values
with a different head constructor are assumed to be equivalent.
\begin{theorem}
Let $C$, $D \in \mathcal{C}$ be constructors, and $v$, $w \in \Lambda_v$
be values. If $C \neq D$ then $C[v] \not\equiv D[w]$.
\end{theorem}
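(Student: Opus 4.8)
The plan is to argue directly from the explicit description of $(\not\equiv)$ given in the excerpt, since lemma \ref{equivlem} only helps to \emph{establish} equivalences. Recall that $C[v] \not\equiv D[w]$ holds as soon as we exhibit an index $i \in \bbN$, a stack $\pi$ and a substitution $\sigma$ such that $C[v]\sigma \ast \pi \converge_i$ while $D[w]\sigma \ast \pi \diverge_i$ (the relation being symmetric, either orientation will do). Because $C[v]$ and $D[w]$ are already sufficient witnesses on their own, I would take $\sigma$ to be the empty substitution and work at level $i = 0$, where $(\red_0) = (\succ)$ and no $\delta$-reduction can occur. The entire argument then takes place inside the pure abstract machine.

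The heart of the proof is to build a stack that discriminates on the head constructor. I would take
$$\pi_0 = [\lambda z\;\tcase{z}{C[y] \to \{\}}]\,\alpha,$$
a stack frame containing a $\lambda$-abstraction whose body is a \emph{single-branch} case analysis retaining only the branch for $C$. Feeding $C[v]$ to $\pi_0$, the machine stores the abstraction and pushes the value, so $C[v] \ast \pi_0 \succ (\lambda z\;\dots) \ast C[v].\alpha$; a $\beta$-step then gives $\tcase{C[v]}{C[y] \to \{\}} \ast \alpha$, whose $C$-branch fires and produces the final state $\{\} \ast \alpha$. Hence $C[v] \ast \pi_0 \converge_0$. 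Running $D[w]$ through the same stack reaches $\tcase{D[w]}{C[y] \to \{\}} \ast \alpha$ instead; since $D \neq C$, the head constructor of the scrutinee matches no branch, and by lemma \ref{remark} this process is stuck, so $D[w] \ast \pi_0 \diverge_0$.

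Putting the two computations together yields exactly the triple $(i,\pi,\sigma) = (0, \pi_0, \emptyset)$ demanded by the definition of $(\not\equiv)$, which closes the argument. The only points needing care are mechanical, and this is where the calculus's value restriction actually works in our favour: I must check that the scrutinee of the case is a value at each step (it is, since $z$ is a $\lambda$-variable, replaced respectively by the values $C[v]$ and $D[w]$), and that the mismatched case is \emph{genuinely} stuck rather than merely blocked for want of a substitution, so that it fails to converge for every level and in particular for $i = 0$. Lemma \ref{remark} supplies precisely this, so I do not anticipate any real obstacle; the entire content of the proof lies in the choice of the discriminating stack $\pi_0$.
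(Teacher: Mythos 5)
Your proof is correct and follows essentially the same route as the paper: both exhibit the witness $(0, \pi, \emptyset)$ required by the definition of $(\not\equiv)$ by means of a stack of the form $[\lambda x\;\tcase{x}{\dots}]\,\alpha$ that discriminates on the head constructor. The only (immaterial) difference is that the paper keeps a $D$-branch whose body is an arbitrary diverging term $\Omega$, whereas you omit that branch and let the mismatched case analysis get stuck; either way $D[w] \ast \pi \diverge_0$ while $C[v] \ast \pi \converge_0$.
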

\begin{proof}
We take $\pi = [\lambda x\; \tcase{x}{C[y] \to y\;|\;D[y] \to \Omega}] \alpha$
where $\Omega$ is an arbitrary diverging term. We then obtain
$C[v] \ast \pi \converge_0$ and $D[w] \ast \pi \diverge_0$.
\end{proof}
\begin{theorem}
Let $\{l_i = v_i\}_{i \in I}$ and $\{l_j = v_j\}_{j \in J}$ be two records.
If $k$ is a index such that $k \in I$ and $k \notin J$ then we have
$\{l_i = v_i\}_{i \in I} \not\equiv \{l_j = v_j\}_{j \in J}$.
\end{theorem}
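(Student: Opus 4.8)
The plan is to adapt the proof of the previous theorem, replacing the case analysis that distinguished two head constructors by a projection on the label $l_k$. Since $k \in I$ but $k \notin J$, this projection will succeed on the first record and get stuck on the second, producing exactly the convergence/divergence asymmetry required by the characterisation of $(\not\equiv)$.

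Concretely, I would work at level $0$ (so that $\red_0 = \succ$), take the empty substitution, and use the stack $\pi_0 = [\lambda x\; x.l_k]\alpha$. First I would reduce the first record against $\pi_0$: two $\succ$-steps (pushing the record into the stack frame, then performing $\beta$-reduction) bring $\{l_i = v_i\}_{i \in I} \ast \pi_0$ to $\{l_i = v_i\}_{i \in I}.l_k \ast \alpha$. As $k \in I$, the projection rule fires and we reach the final state $v_k \ast \alpha$, giving $\{l_i = v_i\}_{i \in I} \ast \pi_0 \converge_0$.

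Next I would run the very same stack against the second record. The first two steps are identical and yield $\{l_j = v_j\}_{j \in J} \ast \pi_0 \succ^{*} \{l_j = v_j\}_{j \in J}.l_k \ast \alpha$. Because $k \notin J$, this last process is precisely one of the stuck forms listed in lemma \ref{remark}; being blocked and not final, it cannot reduce to a final state, so $\{l_j = v_j\}_{j \in J} \ast \pi_0 \diverge_0$. Instantiating the characterisation of $(\not\equiv)$ with $i = 0$, the stack $\pi_0$ and the empty substitution then delivers the desired inequivalence.

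I do not expect any serious obstacle here: the argument is essentially a transcription of the constructor case, with projection playing the role of case analysis. The only point demanding a moment's attention is the observation that a stuck state diverges, which is immediate from the definitions, since such a state is blocked yet not final and hence can never converge; everything else is a routine unfolding of the reduction rules.
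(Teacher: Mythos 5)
Your proof is correct and matches the paper's, which simply states that the result is immediate using the stack $[\lambda x\; x.l_k]\alpha$; you have filled in exactly the intended reduction trace and the appeal to the stuck form from lemma \ref{remark}. No issues.
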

\begin{proof}
Immediate using the stack $\pi = [\lambda x\; x.l_k] \alpha$.
\end{proof}
\begin{theorem}
For every $x \in \mathcal{V}_\lambda$, $v \in \Lambda_v$, $t \in \Lambda$,
$C \in \mathcal{C}$ and for every record $\{l_i = v_i\}_{i \in I}$ we have
the following inequivalences.
$$
\lambda x\; t \not\equiv C[v]
\quad\quad\quad
\lambda x\; t \not\equiv \{l_i = v_i\}_{i \in I}
\quad\quad\quad
C[v] \not\equiv \{l_i = v_i\}_{i \in I}
$$
\end{theorem}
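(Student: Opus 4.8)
My plan is to witness each of the three inequivalences at level $0$, where $\red_0 = \succ$, by exhibiting one stack on which the two values have opposite convergence behaviour. Following the pattern of the preceding proofs, I would use a testing stack of the shape $\pi = [\lambda z\; M]\,\alpha$: for any value $w$ we have $w \ast [\lambda z\; M]\,\alpha \succ (\lambda z\; M) \ast w.\alpha \succ M[z := w] \ast \alpha$, so it is enough to craft the tester $M$ so that $M[z := w]$ reaches a final state for one of the two head shapes and is stuck for the other. Since the two values involved in each inequivalence are closed, the substitution $\sigma$ in the definition of $(\not\equiv)$ plays no role and may be taken empty.

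Two of the three inequivalences can be discharged with a single tester, namely the one-branch case analysis $M_C = \tcase{z}{C[y] \to \{\}}$ built from the constructor $C$ of the statement. On $C[v]$ it fires, $\tcase{C[v]}{C[y] \to \{\}} \ast \alpha \succ \{\} \ast \alpha$, reaching a final state, whereas on $\lambda x\;t$ and on any record $\{l_i = v_i\}_{i \in I}$ it is a stuck state by Lemma~\ref{remark}. Thus with $\pi = [\lambda z\; M_C]\,\alpha$ we obtain $C[v] \ast \pi \converge_0$ while $\lambda x\;t \ast \pi \diverge_0$ and $\{l_i = v_i\}_{i \in I} \ast \pi \diverge_0$, which gives both $\lambda x\;t \not\equiv C[v]$ and $C[v] \not\equiv \{l_i = v_i\}_{i \in I}$. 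The body $t$ is never examined, so this covers every $t$.

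For the remaining inequivalence $\lambda x\;t \not\equiv \{l_i = v_i\}_{i \in I}$ the case tester is useless, as it blocks on both sides; here the separating eliminator is projection. Taking $M = z.l_k$ for an index $k \in I$, the record reduces, $\{l_i = v_i\}_{i \in I}.l_k \ast \alpha \succ v_k \ast \alpha$, reaching a final state, while $(\lambda x\;t).l_k \ast \alpha$ is stuck by Lemma~\ref{remark}. Hence $\pi = [\lambda z\; z.l_k]\,\alpha$ yields $\{l_i = v_i\}_{i \in I} \ast \pi \converge_0$ and $\lambda x\;t \ast \pi \diverge_0$, and the inequivalence follows from the definition of $(\not\equiv)$.

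The delicate point, and the step I expect to be the real obstacle, is that this projection argument presupposes a field to project, hence only settles the case of a non-empty record. When $I = \emptyset$ neither projection nor case analysis separates $\lambda x\;t$ from $\{\}$, since both are stuck under either eliminator; the only operation that tells them apart is application, through $\lambda x\;t \ast v.\pi' \succ t[x := v] \ast \pi'$ versus the stuck $\{\} \ast v.\pi'$. The snag is that firing the abstraction exposes its body $t$, whose convergence is not under our control: if $t$ loops and $x \notin FV_\lambda(t)$, then $\lambda x\;t$ fails to converge on every application just as $\{\}$ does, and a $\delta$-based comparison would be circular since it presupposes the very inequivalence sought. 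Forcing this degenerate case through — by arranging the application to converge, or by treating the empty record separately — is where the argument requires genuine care.
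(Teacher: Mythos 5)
Your testers for $\lambda x\,t \not\equiv C[v]$, for $C[v] \not\equiv \{l_i = v_i\}_{i\in I}$, and for $\lambda x\,t \not\equiv \{l_i = v_i\}_{i\in I}$ with $I \neq \emptyset$ are correct, and they are exactly the kind of stacks the paper has in mind when it says the proof is ``mostly similar to the proofs of the previous two theorems''. You have also put your finger on precisely the subtlety the paper itself flags: when $I = \emptyset$ and $t[x := v]$ diverges for every $v$, no stack of the existing language separates $\lambda x\,t$ from $\{\}$, and the paper concedes that in this situation one does \emph{not} have $\lambda x\,t \not\equiv \{\}$ in the unmodified calculus.

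But you stop at the diagnosis, and that is a genuine gap: the theorem still has to be established, and the missing idea is that the \emph{language itself} must be extended with a discriminating instruction, in the same spirit as the earlier addition of $\delta_{v,w}$ to make theorem \ref{biortho} go through. The paper introduces a new term $\text{unit}_v$ with the single reduction rule $\text{unit}_{\{\}} \ast \pi \succ \{\} \ast \pi$, stipulating that $\text{unit}_v \ast \pi$ is stuck for every value $v \neq \{\}$; the stack $\pi = [\lambda x\,\text{unit}_x]\,\alpha$ then sends $\{\}$ to a final state while leaving $\lambda x\,t$ stuck, settling the empty-record case without ever evaluating $t$. Neither of the escape routes you gesture at works without such an extension: ``arranging the application to converge'' is impossible when $t$ diverges under every substitution, and ``treating the empty record separately'' still requires some observation that tells $\{\}$ apart from a uniformly diverging function. (A minor side remark: $t$, $v$ and the $v_i$ need not be closed as you assert, but this is harmless, since the definition of $(\not\equiv)$ only asks for \emph{some} substitution and you may still take $\sigma$ empty.)
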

\begin{proof}
The proof is mostly similar to the proofs of the previous two theorems.
However, there is a subtlety with the second inequivalence. If for every
value $v$ the term $t[x:=v]$ diverges, then we do not have
$\lambda x\;t \not\equiv \{\}$. Indeed, there is no evaluation context (or
stack) that is able to distinguish the empty record $\{\}$ and a diverging
function. To solve this problem, we can extend the language with a new kind
of term $\text{unit}_v$ and extend the relation $(\succ)$ with the
following rule.
$$\text{unit}_{\{\}} \ast \pi \quad\succ\quad \{\} \ast \pi$$
The process $\text{unit}_v \ast \pi$ is stuck for every value $v \neq \{\}$.
The proof can the be completed using the stack
$\pi = [\lambda x\;\text{unit}_x] \alpha$.
\end{proof}

The previous five theorems together with the extensionality of $(\equiv)$ and
its properties as an equivalence relation can be used to implement a partial
decision procedure for equivalence. We will incorporate this procedure into
the typing rules by introducing a new form of judgment.

\begin{definition}
An equational context $\mathcal{E}$ is a list of hypothetical equivalences
and inequivalences. Equational contexts are built using the following grammar.
$$
\mathcal{E} \;:=\;
  \bullet
  \;\;|\;\;
  \mathcal{E}, t \equiv u
  \;\;|\;\;
  \mathcal{E}, t \not\equiv u
$$
Given a context $\Gamma$, we denote $\mathcal{E}_\Gamma$ its restriction to
an equational context.
\end{definition}
\begin{definition}
Let $\mathcal{E}$ be an equational context. The judgement
$\mathcal{E} \vdash \bot$ is valid if and only if the partial decision
procedure is able to derive a contradiction in $\mathcal{E}$. We will write
$\mathcal{E} \vdash t \equiv u$ for $\mathcal{E}, t \not\equiv u \vdash \bot$
and $\mathcal{E} \vdash t \not\equiv u$ for
$\mathcal{E}, t \equiv u \vdash \bot$
\end{definition}

To discharge equations from the context, the following two typing rules are
added to the system.
\begin{prooftree}
  \AxiomC{$\Gamma, u_1 \equiv u_2 \vdash t : A$}
  \AxiomC{$\mathcal{E}_\Gamma \vdash u_1 \equiv u_2$}
  \RightLabel{$\equiv$}
  \BinaryInfC{$\Gamma \vdash t : A$}
\end{prooftree}
\begin{prooftree}
  \AxiomC{$\Gamma, u_1 \not\equiv u_2 \vdash t : A$}
  \AxiomC{$\mathcal{E}_\Gamma \vdash u_1 \not\equiv u_2$}
  \RightLabel{$\not\equiv$}
  \BinaryInfC{$\Gamma \vdash t : A$}
\end{prooftree}
\smallskip
The soundness of these new rules follows easily since the decision procedure
agrees with the semantical notion of equivalence. The axioms that were given
at the beginning of this section are only used to partially reflect the
semantical equivalence relation in the syntax. This is required if we are to
implement the decision procedure.

Another way to use an equational context is to derive a contradiction
directly. For instance, if we have a context $\Gamma$ such that
$\mathcal{E}_\Gamma$ yields a contradiction, one should be able to finish
the corresponding proof. This is particularly useful when working with
variants and case analysis. For instance, some branches of the case analysis
might not be reachable due to constraints on the matched term. For instance,
we know that in the term
$$\tcase{C[v]}{C[x] \to x \;|\; D[x] \to t}$$
the branch corresponding to the $D$ constructor will never be reached.
Consequently, we can replace $t$ by any term and the computation will still
behave correctly. For this purpose we introduce a special value $\scissors$
on which the abstract machine fails. It can be introduced with the following
typing rule.
\begin{prooftree}
  \AxiomC{$\mathcal{E}_\Gamma \vdash \bot$}
  \RightLabel{$\scissors$}
  \UnaryInfC{$\Gamma \vvdash \scissors : \bot$}
\end{prooftree}
\smallskip
The soundness of this rule is again immediate.

\section{Further Work}
The model presented in the previous sections is intended to be used as the
basis for the design of a proof assistant based on a call-by-value ML language
with control operators. A first prototype (with a different theoretical
foundation) was implemented by Christophe Raffalli \cite{pml}. Based on this
experience, the design of a new version of the language with a clean
theoretical basis can now be undertaken. The core of the system will consist
of three independent components: a type-checker, a termination checker and a
decision procedure for equivalence.

Working with a Curry style language has the disadvantage of making
type-checking undecidable. While most proof systems avoid this problem by
switching to Church style, it is possible to use heuristics making most
Curry style programs that arise in practice directly typable. Christophe
Raffalli implemented such a system \cite{normalizer} and from his experience
it would seem that very little help from the user is required in general.
In particular, if a term is typable then it is possible for the user to
provide hints (e.g. the type of a variable) so that type-checking may
succeed. This can be seen as a kind of completeness.

Proof assistants like Coq \cite{coq} or Agda \cite{agda} both have decidable
type-checking algorithms. However, these systems provide mechanisms for
handling implicit arguments or meta-variables which introduce some
incompleteness. This does not make these systems any less usable in practice.
We conjecture that going even further (i.e. full Curry style) provides a
similar user experience.

To obtain a practical programming language we will need support for recursive
programs. For this purpose we plan on adapting Pierre Hyvernat's termination
checker \cite{sct}. It is based on size change termination and has already
been used in the first prototype implementation.
We will also need to extend our type system with inductive (and coinductive)
types \cite{phdraff,mendler}. They can be introduced in the system using
fixpoints $\mu X\,A$ (and $\nu X\,A$).

\section*{Acknowledgments}
I would like to particularly thank my research advisor, Christophe Raffalli,
for his guidance and input. I would also like to thank Alexandre Miquel for
suggesting the encoding of dependent products. Thank you also to Pierre
Hyvernat, Tom Hirschowitz, Robert Harper and the anonymous reviewers for
their very helpful comments.

\bibliographystyle{splncs03}
\bibliography{biblio}

\begin{thebibliography}{10}
\providecommand{\url}[1]{\texttt{#1}}
\providecommand{\urlprefix}{URL }

\bibitem{trellys}
Casinghino, C., Sjöberg, V., Weirich, S.: Combining proofs and programs in a
  dependently typed language. In: Jagannathan, S., Sewell, P. (eds.) The 41st
  Annual {ACM} {SIGPLAN-SIGACT} Symposium on Principles of Programming
  Languages, {POPL} '14, San Diego, CA, USA. pp. 33--46. {ACM} (2014)

\bibitem{nuprl}
Constable, R.L., Allen, S.F., Bromley, M., Cleaveland, R., Cremer, J.F.,
  Harper, R.W., Howe, D.J., Knoblock, T.B., Mendler, N.P., Panangaden, P.,
  Sasaki, J.T., Smith, S.F.: Implementing mathematics with the Nuprl proof
  development system. Prentice Hall (1986)

\bibitem{coc}
Coquand, T., Huet, G.: The calculus of constructions. Inf. Comput.  76(2-3),
  95--120 (Feb 1988)

\bibitem{damas}
Damas, L., Milner, R.: Principal type-schemes for functional programs. In:
  Proceedings of the 9th ACM SIGPLAN-SIGACT Symposium on Principles of
  Programming Languages. pp. 207--212. POPL '82, ACM, New York, NY, USA (1982)

\bibitem{garigue}
Garrigue, J.: Relaxing the value restriction. In: Kameyama, Y., Stuckey, P.
  (eds.) Functional and Logic Programming, Lecture Notes in Computer Science,
  vol. 2998, pp. 196--213. Springer Berlin Heidelberg (2004)

\bibitem{griffin}
Griffin, T.G.: A formul\ae-as-types notion of control. In: In Conference Record
  of the Seventeenth Annual ACM Symposium on Principles of Programming
  Languages. pp. 47--58. ACM Press (1990)

\bibitem{smllist}
Harper, R., Lillibridge, M.: {ML with callcc is unsound} (Jul 1991),
  \url{http://www.seas.upenn.edu/~sweirich/types/archive/1991/msg00034.html}

\bibitem{howe}
Howe, D.J.: Equality in lazy computation systems. In: Proceedings of the Fourth
  Annual Symposium on Logic in Computer Science {(LICS} '89), Pacific Grove,
  California, USA, June 5-8, 1989. pp. 198--203 (1989)

\bibitem{sct}
Hyvernat, P.: The size-change termination principle for constructor based
  languages. Logical Methods in Computer Science  10(1) (2014)

\bibitem{aura}
Jia, L., Vaughan, J.A., Mazurak, K., Zhao, J., Zarko, L., Schorr, J.,
  Zdancewic, S.: {AURA:} a programming language for authorization and audit.
  In: Hook, J., Thiemann, P. (eds.) Proceeding of the 13th {ACM} {SIGPLAN}
  international conference on Functional programming, {ICFP} 2008, Victoria,
  BC, Canada, September 20-28, 2008. pp. 27--38. {ACM} (2008)

\bibitem{kam}
Krivine, J.: A call-by-name lambda-calculus machine. Higher-Order and Symbolic
  Computation  20(3),  199--207 (2007)

\bibitem{krivine}
Krivine, J.: Realizability in classical logic. In: Interactive models of
  computation and program behaviour, Panoramas et synth\`eses, vol.~27, pp.
  197--229. Soci\'et\'e Math\'ematique de France (2009)

\bibitem{long}
Lepigre, R.: A realizability model for a semantical value restriction (2015),
  \url{https://lama.univ-savoie.fr/~lepigre/files/docs/semvalrest2015.pdf},
  long version

\bibitem{leroycbn}
Leroy, X.: Polymorphism by name for references and continuations. In: 20th
  symposium Principles of Programming Languages. pp. 220--231. ACM Press (1993)

\bibitem{leroy}
Leroy, X., Weis, P.: Polymorphic type inference and assignment. In: Proceedings
  of the 18th ACM SIGPLAN-SIGACT Symposium on Principles of Programming
  Languages. pp. 291--302. POPL '91, ACM, New York, NY, USA (1991)

\bibitem{licata}
Licata, D.R., Harper, R.: Positively dependent types. In: Altenkirch, T.,
  Millstein, T.D. (eds.) Proceedings of the 3rd {ACM} Workshop Programming
  Languages meets Program Verification, {PLPV} 2009, Savannah, GA, USA, January
  20, 2009. pp. 3--14. {ACM} (2009)

\bibitem{ml82}
Martin-L{\"o}f, P.: Constructive mathematics and computer programming. In:
  Cohen, L., {Łoś}, J., Pfeiffer, H., Podewski, K.P. (eds.) Logic,
  Methodology and Philosophy of Science VI, Studies in Logic and the
  Foundations of Mathematics, vol. 104, pp. 153--175. North-Holland (1982)

\bibitem{coq}
\mbox{The Coq development team}: The Coq proof assistant reference manual.
  LogiCal Project (2004), \url{http://coq.inria.fr}, version 8.0

\bibitem{mendler}
Mendler, N.P.: Recursive types and type constraints in second-order lambda
  calculus. In: Proceedings of the Symposium on Logic in Computer Science
  ({LICS}) 1987. pp. 30--36 (1987)

\bibitem{phdalex}
Miquel, A.: Le Calcul des Constructions Implicites : Syntaxe et S\'emantique.
  Ph.D. thesis, Universit\'e Paris VII (2001)

\bibitem{munch}
Munch{-}Maccagnoni, G.: Focalisation and classical realisability. In: Computer
  Science Logic, 23rd international Workshop, {CSL} 2009, 18th Annual
  Conference of the EACSL. pp. 409--423 (2009)

\bibitem{agda}
Norell, U.: {Dependently Typed Programming in Agda}. In: Lecture Notes from the
  Summer School in Advanced Functional Programming (2008)

\bibitem{pvs96}
Owre, S., Rajan, S., Rushby, J., Shankar, N., Srivas, M.: {PVS:} combining
  specification, proof checking, and model checking. In: Alur, R., Henzinger,
  T.A. (eds.) Computer-Aided Verification, {CAV} '96. pp. 411--414. No. 1102 in
  Lecture Notes in Computer Science (1996)

\bibitem{parigot}
Parigot, M.: $\lambda\mu$-calculus: An algorithmic interpretation of classical
  natural deduction. In: Lecture Notes in Computer Science, vol. 624, pp.
  190--201. Springer (1992)

\bibitem{phdraff}
Raffalli, C.: L'Arithm\'etiques Fonctionnelle du Second Ordre avec Points
  Fixes. Ph.D. thesis, Universit\'e Paris VII (1994)

\bibitem{normalizer}
Raffalli, C.: A normaliser for pure and typed $\lambda$-calculus (1996),
  \url{http://lama.univ-savoie.fr/~raffalli/normaliser.html}

\bibitem{pml}
Raffalli, C.: The PML programming language. LAMA - Universit\'e Savoie
  Mont-Blanc (2012), \url{http://lama.univ-savoie.fr/tracpml/}

\bibitem{fstar}
Swamy, N., Chen, J., Fournet, C., Strub, P., Bhargavan, K., Yang, J.: Secure
  distributed programming with value-dependent types. In: Chakravarty, M.M.T.,
  Hu, Z., Danvy, O. (eds.) Proceeding of the 16th {ACM} {SIGPLAN} international
  conference on Functional Programming, {ICFP} 2011, Tokyo, Japan, September
  19-21, 2011. pp. 266--278. {ACM} (2011)

\bibitem{tofte}
Tofte, M.: Type inference for polymorphic references. Inf. Comput.  89(1),
  1--34 (Sep 1990)

\bibitem{wright2}
Wright, A.K.: Simple imperative polymorphism. In: LISP and Symbolic
  Computation. pp. 343--356 (1995)

\bibitem{wright1}
Wright, A.K., Felleisen, M.: A syntactic approach to type soundness. Inf.
  Comput.  115(1),  38--94 (1994)

\bibitem{ats}
Xi, H.: {Applied Type System (extended abstract)}. In: post-workshop
  Proceedings of TYPES 2003. pp. 394--408. Springer-Verlag LNCS 3085 (2004)

\bibitem{dml}
Xi, H., Pfenning, F.: Dependent types in practical programming. In: Proceedings
  of the 26th ACM SIGPLAN Symposium on Principles of Programming Languages. pp.
  214--227. San Antonio (January 1999)

\end{thebibliography}

\end{document}